\documentclass{article}

     \PassOptionsToPackage{numbers, compress}{natbib}



     \usepackage[final]{neurips_2020}


\usepackage[utf8]{inputenc} 
\usepackage[T1]{fontenc}    
\usepackage{hyperref}       
\usepackage{url}            
\usepackage{booktabs}       
\usepackage{amsfonts}       
\usepackage{nicefrac}       
\usepackage{microtype}      

\usepackage{graphicx}
\usepackage{subfigure}
\usepackage{amssymb}
\usepackage{amsthm}
\usepackage{amsmath}
\usepackage{multirow}
\usepackage{graphics}

\usepackage{algorithm}
\usepackage{algorithmic}
\usepackage{color, colortbl}

\definecolor{Gray}{gray}{0.9}

\usepackage[table,xcdraw]{xcolor}

\usepackage{multirow}

\usepackage{diagbox}

\newtheorem*{theorem*}{Theorem}
\newtheorem{theorem}{Theorem}

\newtheorem{corollary}{Corollary}
\newtheorem{corollary*}{Corollary}

\title{Subgroup-based Rank-1 Lattice Quasi-Monte Carlo}

%

\author{%
  Yueming Lyu   \\
  Australian Artificial Intelligence Institute\\
  University of Technology Sydney\\
  \texttt{yueminglyu@gmail.com} \\
   \And
   Yuan Yuan \\
   CSAIL \\
   Massachusetts Institute of Technology \\
   \texttt{miayuan@mit.edu} \\
   \AND
   Ivor W. Tsang\ \\
  Australian Artificial Intelligence Institute\\
  University of Technology Sydney\\
   \texttt{ Ivor.Tsang@uts.edu.au} \\
}

\begin{document}

\maketitle

\begin{abstract}
 Quasi-Monte Carlo (QMC) is an essential tool for integral approximation, Bayesian inference, and sampling for simulation in science, etc. In the QMC area, the rank-1 lattice is important due to its simple operation, and nice properties for point set construction.  However, the construction of the  generating vector of the rank-1 lattice is usually time-consuming because of an exhaustive computer search.  To address this issue,  we propose a simple closed-form rank-1 lattice construction method based on group theory.  Our method reduces the number of distinct pairwise distance values to generate a more regular lattice. We theoretically prove a lower and an upper bound of the minimum pairwise distance of any non-degenerate rank-1 lattice. Empirically, our methods can generate a near-optimal rank-1 lattice compared with the Korobov exhaustive search regarding the $l_1$-norm and $l_2$-norm minimum distance. Moreover, experimental results show that our method achieves superior approximation performance on  benchmark integration test problems and  kernel approximation problems. 
\end{abstract}

\section{Introduction}
\label{Introduction}

Integral operation is critical in a large amount of interesting machine learning applications, e.g. kernel approximation with random feature maps~\cite{rahimi2008random}, variational  inference in Bayesian learning~\cite{beal2003variational},    generative modeling and variational autoencoders~\cite{kingma2013auto}. Directly calculating an integral is usually infeasible in these real applications. Instead, researchers usually try to find an approximation for the integral.    A simple and conventional approximation  is  Monte Carlo (MC) sampling, in  which the integral is approximated by calculating the average of the i.i.d.  sampled integrand values. 
Monte Carlo (MC) methods~\cite{hammersley2013monte} are  widely studied with many techniques to reduce the approximation error, which includes  importance sampling and variance reduction techniques and more~\cite{arouna2004adaptative}. 



To further reduce the approximation error, Quasi-Monte Carlo (QMC) methods utilize a low discrepancy point set instead of the i.i.d. sampled point set used in the standard Monte Carlo method. There are two main research lines in the area of QMC~\cite{dick2013high,niederreiter1992random}, i.e., the digital nets/sequences and  lattice rules. The Halton sequence and the Sobol sequence are the widely used representatives of digital sequences~\cite{dick2013high}. Compared with digital nets/sequences, the points set of lattice rules preserve the properties of lattice. The points partition the space into small repeating cells.  Among  previous research on the lattice rules, Korobov introduced integration lattice rules  in~\cite{korobov1959approximate} for an integral approximation of the periodic integrands. \cite{sloan2001tractability} proves that there also exist good lattice rules for non-periodic integrands. According to  general lattice rules, a point set is usually constructed by enumerating the integer vectors and multiplying them with an invertible generator matrix. A general lattice rule has to check each constructed point to see whether it is inside a unit cube and discard it if it is not. The process is  repeated until we reach the desired number of points. This construction process is inefficient since the checking step is required for every point.  Note that rescaling the unchecked points    by the maximum norm of all the  points may lead to non-uniform points set in the cube.

An interesting special case of the lattice rules is the rank-1 lattice, which only requires one generating vector to construct the whole point set. Given the generating vector,  rank-1 lattices can be obtained by a very simple construction form. It is thus much more efficient to construct the point set with the simple construction form. Compared with the general lattice rule, the construction form of the rank-1 lattice has already guaranteed the constructed point to be inside the unit cube, therefore, no further checks are required. We refer to \cite{dick2013high} and \cite{niederreiter1992random} for a more detailed survey of QMC and rank-1 lattice.

\begin{figure*}[t]
\label{demo}
\centering
\subfigure[\scriptsize{ i.i.d.  Monte Carlo sampling}]{
\label{fig2c_2}
\includegraphics[width=0.3\linewidth]{./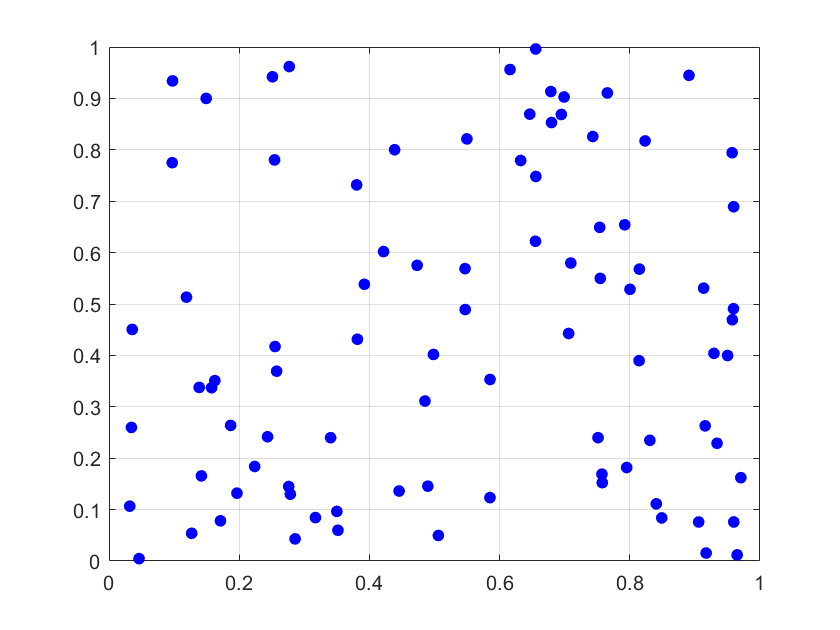}}
\subfigure[\scriptsize{ Sobol sequence}]{
\label{fig2c_3}
\includegraphics[width=0.3\linewidth]{./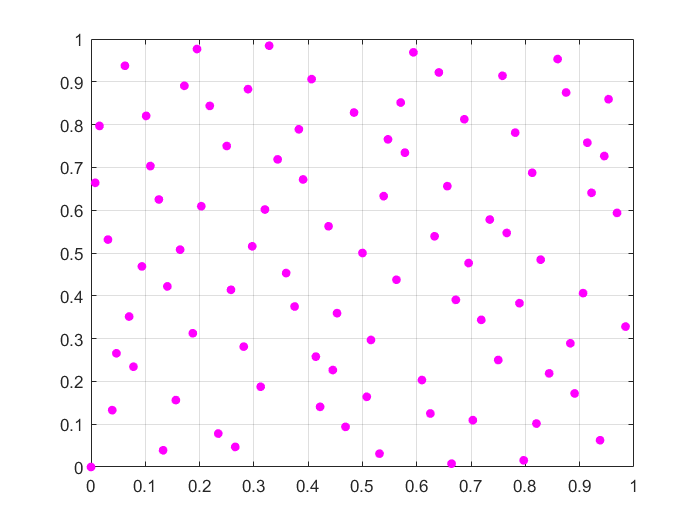}}
\subfigure[\scriptsize{Our subgroup rank-1 lattice }]{
\label{fig2a_l}
\includegraphics[width=0.3\linewidth]{./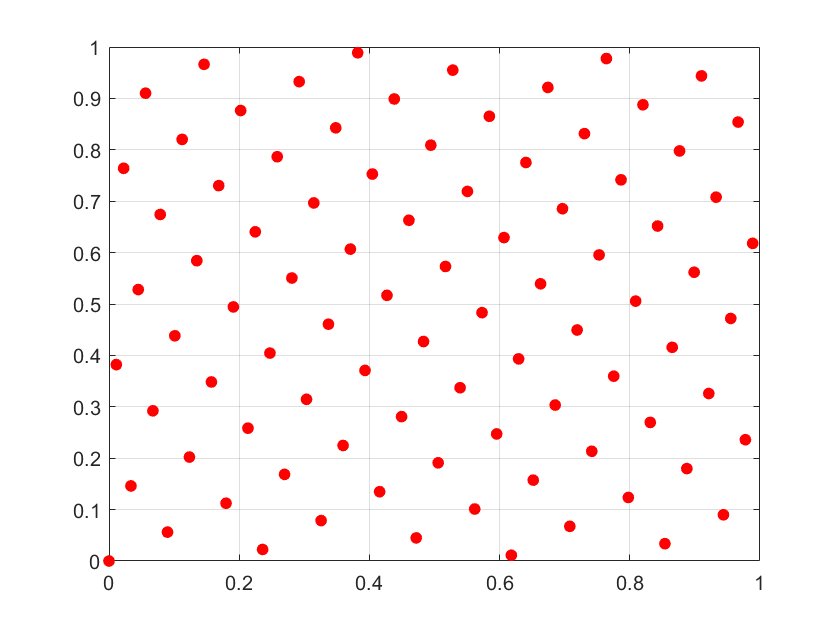}}
\caption{The 89  points constructed by i.i.d. Monte Carlo sampling, Sobol sequence and our subgroup rank-1 lattice  on $[0,1]^2$.}
\end{figure*}

Although the rank-1 lattice can derive a simple construction form,  obtaining the generating vector remains difficult. Most methods~\cite{korobov1960properties,nuyens2006fast,doerr2013constructing,leobacher2014introduction,l2016algorithm,laimer2017combined,owen2019monte} in the literature rely on an exhaustive computer search by optimizing some criteria  to find a good generating vector. Korobov~\cite{korobov1960properties} suggests searching the generating vector in a form of $ [1,\alpha,\alpha^2,\cdots,\alpha^{d-1}]$ with $\alpha \in \{1,\cdots,n-1\}$, where $ d $ is the dimension and $ n $ is the number of points, such that the greatest common divisor of $ \alpha $ and $ n $ equals to 1. Sloan et al. study the component-by-component construction for the lattice rules~\cite{sloan2002component}. It is a greedy search that is faster than an exhaustive search.
Nuyens et al.~\cite{nuyens2006fast} propose a fast algorithm to construct the generating vector using a component-by-component search method. Although the exhaustive checking steps are avoided compared with general lattice rules, the rank-1 lattice still requires a brute-force search for the generating vector, which is still very time-consuming, especially when the dimension and the number of points are large.

To address this issue, we propose a closed-form rank-1 lattice rule that directly computes a generating vector without any search process. To generate a more evenly spaced lattice, we propose to reduce the number of distinct pairwise distance in the lattice point set to make the lattice more regular w.r.t. the minimum toroidal distance~\cite{minTdistance}. Larger minimum toroidal distance means more regular. 
Based on group theory, we derive that if the generating vector $\boldsymbol{z}$ satisfies the condition that set $\{\boldsymbol{z},-\boldsymbol{z}\}:=\{z_1,\cdots, z_d,-z_1,\cdots,-z_d\}$ is a subgroup of the multiplicative group of integers modulo $n$, where $n$ is the number of points, then the number of distinct pairwise distance can be efficiently reduced.
We construct the generating vector by ensuring this condition.
With the proposed subgroup-based rank-1 lattice, we can construct a more evenly spaced lattice. An illustration of the  generated lattice is shown in Figure~\ref{demo}.

Our contributions are summarized as follows:
\begin{itemize}
    \item We propose a simple and efficient closed-form method for rank-1 lattice construction, which does not require the time-consuming exhaustive computer search that previous rank-1 lattice algorithms rely on. A side product is a closed-form method to generate QMC points set on sphere $\mathbb{S}^{d-1}$ with bounded mutual coherence, which is presented in Appendix.

   \item We generate a more regular lattice by reducing the number of distinct pairwise distances. We prove a lower and an upper bound for the minimum $l_1$-norm-based and $l_2$-norm-based toroidal distance of the rank-1 lattice. Theoretically, our constructed lattice is the optimal rank-1 lattice for maximizing the minimum toroidal distance when the number of points $n$ is a prime number and $n=2d+1$. 

   \item Empirically, the proposed method generates near-optimal rank-1 lattice compared with the Korobov search method in maximizing the minimum of the $l_1$-norm-based and $l_2$-norm-based toroidal distance.
   
    \item Our method obtains better approximation accuracy on  benchmark test problems and  kernel approximation problem. 
\end{itemize}

\section{Background}

We first give the definition and the properties of lattices in Section~\ref{lattice}. Then we introduce the minimum distance criterion for lattice construction in Section~\ref{background_distance}.

\subsection{The Lattice}
\label{lattice}

A $d$-dimensional lattice $\Lambda$ is a set of points that contains no limit points and satisfies~\cite{lyness2003notes}
\begin{align}
   \forall \boldsymbol{x},\boldsymbol{x}' \in \Lambda \Rightarrow \boldsymbol{x}+ \boldsymbol{x}' \in \Lambda \; \text{and} \; \boldsymbol{x} - \boldsymbol{x}' \in \Lambda. 
\end{align}
A widely known lattice is the unit lattice $\mathbb{Z}^d$ whose components are all integers. A general lattice is  constructed by a generator matrix. Given a generator matrix $\boldsymbol{B} \in \mathbb{R}^{d \times d}$, a $d$-dimensional lattice $\Lambda$ can be constructed as 
\begin{align}
\label{MatrixL}
 \Lambda =    \{\boldsymbol{By} \big | \; \boldsymbol{y} \in \mathbb{Z}^d  \}.
\end{align}
A generator matrix is not unique to a lattice $\Lambda$, namely, a lattice $\Lambda$ can be obtained from a different generator matrices.

A lattice point set for integration is constructed as $\Lambda \cap [0,1)^d$. This step may require an additional search (or check) for all the points inside the unit cube.

A rank-1 lattice is a special case of the general lattice, which has a simple operation for point set construction instead of directly using Eq.(\ref{MatrixL}).
A rank-1 lattice point set  can be constructed as
 \begin{align}
 \label{rank1lattice}
  \boldsymbol{x}_i :=  \left\langle   \frac{i \boldsymbol{z}  }{n}  \right\rangle , i \in \{ 0,1,\cdots,n-1 \},
 \end{align}
 where $\boldsymbol{z} \in \mathbb{
 Z}^d $ is the so-called generating vector, and the big $ \langle\cdot\rangle $ denotes the operation of taking  the fractional part of the input number elementwise.  Compared with the general lattice rule, the construction form of the rank-1 lattice already  ensures the constructed points to be inside the unit cube without the need for any further checks.
 
 Given a rank-1 lattice set $X$ in the unit cube, we can also construct a randomized point set. Sample a random variable $\boldsymbol{\Delta} \sim Uniform [0,1]^d$, we can construct a   point set $\widetilde{X}$ by random shift as~\cite{dick2013high}
\begin{align}
\label{randomshift}
    \widetilde{X} = \left\langle X + \boldsymbol{\Delta}    \right\rangle.
\end{align}

\subsection{The separating distance of a lattice}
\label{background_distance}

Several criteria have been studied in the literature for good lattice construction through computer search. Worst case error is one of the most widely used criteria for functions in a reproducing kernel Hilbert space (RKHS)~\cite{dick2013high}. However, this criterion requires the prior knowledge of functions and the assumption of the RKHS. Without assumptions of the functions,  it is reasonable to construct a good lattice by designing an evenly spaced point set.  Minimizing the covering radius is  a good way for evenly spaced point set construction.

As minimizing the covering radius of the lattice is equivalent to maximizing the packing radius~\cite{pack}, we can construct the  point set through maximizing the packing radius (separating distance) of the lattice.  Define  the covering radius  and  packing radius  of a set of points $X=\{x_1,...,x_N\}$ as Eq.(\ref{hx}) and Eq.(\ref{rx}), respectively:
\begin{align}\label{hx}
     h_X =  \sup_{x\in \mathcal{X}}{\min _{x_k \in X }} \|x-x_k \|,  \\
     \rho_X = \frac{1}{2} \min _ {\substack{x_i,x_j  \in X, \\ x_i \ne x_j}} \|x_i -x_j \| . \label{rx}
\end{align}
The $l_p$-norm-based toroidal distance \cite{minTdistance} between two lattice points $ {\bf{x}}\in [0,1]^d $ and $ {\bf{x}'}\in [0,1]^d $ can be defined as: 
 \begin{equation}
\|{\bf{x}} - {\bf{x}'}\|_{T_p} := \!\!\left( \sum_{i=1}^d (\min( |x_i - x'_i|  , 1-|x_i - x'_i|))^p  \right)^{\frac{1}{p}}
\end{equation}
Because the difference (subtraction) between two lattice points is still a lattice point, and a rank-1 lattice has a period 1, the packing radius  $ \rho_X $ of a rank-1 lattice can be calculated as
\begin{align}\label{rank1D}
    \rho_X = \min_{{\bf{x}} \in {X \setminus  {\bf{0}}}}\frac{1}{2}\|{\bf{x}} \|_{T_2},
\end{align}
where $ \|{\bf{x}} \|_{T_2} $ denotes the $l_2$-norm-based toroidal distance between $ \bf{x} $ and $ \bf{0} $, symbol $X \setminus  {\bf{0}}$ denotes the set $X$ excludes the point $\boldsymbol{0}$.
This formulation calculates the packing radius  with a time complexity of $ \mathcal{O}(Nd) $ rather than $ \mathcal{O}(N^2d) $ in pairwise comparison. 
However, the computation of the packing radius is not easily accelerated by fast Fourier transform due to the minimum operation in Eq.(\ref{rank1D}).


\section{Subgroup-based Rank-1 Lattice }

In this section, we derive our construction of a rank-1 lattice based on the subgroup theory. Then we analyze the properties of our method. We provide detailed proofs in the supplement.

\subsection{Construction of the Generating Vector}
From the definition of rank-1 lattice, we know the packing radius of rank-1 lattice with $n$ points can be reformulated as 
\begin{align}\label{rank1D2}
    \rho_X = \min_{  i \in \{1,\cdots,n-1\} }\frac{1}{2}\|{\bf{x}}_i \|_{T_2},
\end{align}
where
 \begin{align}
  \boldsymbol{x}_i := \frac{i \boldsymbol{z}  \; \text{mod} \;n}{n}, i \in \{1,...,n-1\}.
 \end{align}
 Then, we have 
 \begin{align}
     \rho_X & = \min_{  i \in \{1,\cdots,n-1\} }\frac{1}{2} \left\| \text{min} \left( \frac{i \boldsymbol{z}  \; \text{mod} \;n}{n},  \frac{ n- i \boldsymbol{z}  \; \text{mod} \;n}{n} \right)  \right\|_{2} \nonumber
   \\  & =  \min_{  i \in \{1,\cdots,n-1\} }\frac{1}{2} \left\| \text{min} \left( \frac{i \boldsymbol{z}  \; \text{mod} \;n}{n},  \frac{  (-i\boldsymbol{z})   \; \text{mod} \;n}{n} \right)  \right\|_{2}, 
 \end{align}
 where $\text{min}(\cdot,\cdot)$ denotes the elementwise min operation between two inputs.
 
Suppose $n$ is a prime number, from  number theory, we know that for a primitive root $g$, the residue of $\{g^0,g^1,\cdots,g^{n-2} \} $  modulo $n$ forms a cyclic group under multiplication, and $g^{n-1} \equiv 1 \; \text{mod} \;n$.  Since $ (g^{\frac{n-1}{2}})^2 =g^{n-1} \equiv 1 \; \text{mod} \;n$, we know  that $g^{\frac{n-1}{2}} \equiv -1 \; \text{mod} \;n$.
 
 Because of the one-to-one correspondence between the residue of $\{g^0,g^1,\cdots,g^{n-2} \} $  modulo $n$ and the set $\{1,2,\cdots,n-1\}$, we can construct the generating vector as
 \begin{align}
     \boldsymbol{z} = [ g^{m_1},g^{m_2},\cdots,g^{m_d} ] \; \text{mod} \; n
 \end{align}
 without loss of generality,  where
$m_1,\cdots,m_d$ are  integer components to be designed. Denote 
$ \boldsymbol{\Bar{z}} = [ g^{\frac{n-1}{2}+m_1},g^{\frac{n-1}{2}+m_2},\cdots,g^{\frac{n-1}{2}+m_d} ] \; \text{mod} \; n$, maximizing the separating distance $\rho_X$ is equivalent to maximizing 
\begin{align}
    J \!  = \! \min_{  k \in \{0,\cdots,n-2\} } \left\| \text{min} (g^k\boldsymbol{z} \; \text{mod} \; n ,g^k\boldsymbol{\Bar{z}} \; \text{mod} \; n)  \right\|_2. 
\end{align}
Suppose $2d$ divides $n-1$, i.e., $2d|(n-1)$, by setting $m_i=g^{\frac{(i-1)(n-1)}{2d}}$ for $i \in \{1,\cdots,d\}$, we know that $H =\{g^{m_1},g^{m_2},\cdots,g^{m_d}, g^{\frac{n-1}{2}+m_1},g^{\frac{n-1}{2}+m_2},\cdots,g^{\frac{n-1}{2}+m_d}  \}$ is equivalent to  setting $\{g^0,g^{\frac{n-1}{2d}},\cdots, g^{\frac{(2d-1)(n-1)}{2d} } \}$ mod $n$ , and it forms a subgroup of the group $\{g^0,g^1,\cdots,g^{n-2}\}$ mod $n$.
From Lagrange's theorem in group theory~\cite{dummit2004abstract},  we know that the cosets of the subgroup $H$ partition the entire group $\{g^0,g^1,\cdots,g^{n-2}\}$ into equal-size, non-overlapping sets, and the number of cosets of $H$ is $\frac{n-1}{2d}$. Thus, we know that distance $\text{min} (g^k\boldsymbol{z} \; \text{mod} \; n ,g^k\boldsymbol{\Bar{z}} \; \text{mod} \; n) $ for $k \in \{0,\cdots,n-2\} $ has  $\frac{n-1}{2d}$ different values, and there are the same numbers of items for each value.

Thus, we can construct the generating vector as 
\begin{align}
     \boldsymbol{z} = [ g^0,g^{\frac{n-1}{2d}}, g^{\frac{2(n-1)}{2d}},\cdots, g^{\frac{(d-1)(n-1)}{2d} } ] \; \text{mod} \; n. 
\end{align}
In this way,  the constructed rank-1 lattice is more regular as it has few  different distinct pairwise distance values, and for each distance, the same number of items obtain this value.  Usually, the constructed regular lattice is more evenly spaced,  and it has a large minimum pairwise distance. We confirm this empirically through extensive experiments in Section~\ref{experiments}.

We summarize our construction method and the properties of the constructed rank-1 lattice in Theorem~\ref{construction}.

\begin{theorem}
\label{construction}
Suppose $n$ is a prime number and $2d | (n-1)$. Let $g$ be a  primitive root of $n$. Let $  \boldsymbol{z} = [ g^0,g^{\frac{n-1}{2d}}, g^{\frac{2(n-1)}{2d}},\cdots, g^{\frac{(d-1)(n-1)}{2d} } ] \; \text{mod} \; n $. Construct a rank-1 lattice $X \!=\!\{\boldsymbol{x}_0,\cdots,\boldsymbol{x}_{n-1}\}$ with $\boldsymbol{x}_i = \frac{i \boldsymbol{z}  \; \text{mod} \;n}{n}, i \in \{0,...,n-1\}  $. Then, there are $\frac{n-1}{2d}$ distinct pairwise  toroidal distance values among $X$, and  each distance value is taken by the  same number of pairs in $X$.
\end{theorem}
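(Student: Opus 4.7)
The plan is to reduce pairwise distances to single-point norms and then show that those norms are constant on cosets of the subgroup $H = \langle g^{(n-1)/(2d)} \rangle$ of $(\mathbb{Z}/n\mathbb{Z})^{*}$. Since the rank-1 lattice is a subgroup of $(\mathbb{R}/\mathbb{Z})^d$, for any $i \neq j$ we have $\boldsymbol{x}_i - \boldsymbol{x}_j \equiv \boldsymbol{x}_{(i-j) \bmod n}$, so the toroidal distance between $\boldsymbol{x}_i$ and $\boldsymbol{x}_j$ is $\|\boldsymbol{x}_{(i-j) \bmod n}\|_{T}$. It therefore suffices to analyse the multiset $\{\|\boldsymbol{x}_k\|_{T} : k = 1, \ldots, n-1\}$.

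First, I would establish the structure of $H$. Since $n$ is prime and $g$ is a primitive root, $h := g^{(n-1)/(2d)}$ has multiplicative order exactly $2d$, so $H = \{h^0, h^1, \ldots, h^{2d-1}\}$ is a cyclic subgroup of order $2d$. Crucially, $h^d = g^{(n-1)/2} \equiv -1 \pmod{n}$, so $-1 \in H$ and consequently $-H = H$. The entries of $\boldsymbol{z}$ are $\{h^0, h^1, \ldots, h^{d-1}\}$, and their negatives are $\{-h^0, \ldots, -h^{d-1}\} = \{h^d, h^{d+1}, \ldots, h^{2d-1}\}$. I would verify these $2d$ elements are distinct: if $h^i = -h^j = h^{j+d}$ with $0 \le i, j \le d-1$, then $i - j \equiv d \pmod{2d}$, impossible for $|i-j| < d$. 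Therefore $\{z_1, \ldots, z_d, -z_1, \ldots, -z_d\} = H$, and the $d$ pairs $\{z_i, -z_i\}$ partition $H$ into its $d$ inverse pairs.

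Next, I would prove coset-invariance of the toroidal norm. For any $k \in \{1, \ldots, n-1\}$, multiplication by $k$ is a bijection on $(\mathbb{Z}/n\mathbb{Z})^{*}$, so
\begin{equation*}
\{k z_1, \ldots, k z_d, -k z_1, \ldots, -k z_d\} \bmod n \;=\; kH,
\end{equation*}
again partitioned into the $d$ inverse pairs $\{k z_i, -k z_i\}$. The $i$-th coordinate of the folded vector defining $\|\boldsymbol{x}_k\|_T$ is $\min(k z_i \bmod n,\, (-k z_i) \bmod n)/n$, i.e. the smaller member of one such pair. Hence the multiset of folded coordinates is exactly the set of ``smaller halves'' of the $d$ inverse pairs of $kH$, a quantity that depends only on the coset $kH$, not on the representative $k$. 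Any symmetric function of these $d$ values, in particular the $l_p$-based toroidal norm, is therefore constant on each coset. By Lagrange's theorem $H$ has exactly $(n-1)/(2d)$ cosets in $(\mathbb{Z}/n\mathbb{Z})^{*}$, each of size $2d$, yielding at most $(n-1)/(2d)$ distinct toroidal distance values.

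Finally, for the equal-multiplicity claim I would count: each of the $(n-1)/(2d)$ cosets supplies $2d$ values of the index $k$, and each such $k$ corresponds to exactly $n$ ordered pairs $(i,j)$ with $j-i \equiv k \pmod n$, hence the same number of pairs per distance value. The main obstacle is the second step, specifically verifying that the entries $\{z_1, \ldots, z_d\}$ pick one representative from every inverse pair in $H$, i.e. that no two $z_i$ are negatives modulo $n$; this is exactly where the hypothesis $2d \mid (n-1)$ together with the choice $m_i = (i-1)(n-1)/(2d)$ is used. Once this is in place, the coset argument and the cardinality count are immediate from Lagrange.
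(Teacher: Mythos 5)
Your proposal is correct and follows essentially the same route as the paper's proof: reduce pairwise distances to the norms $\|\boldsymbol{x}_k\|_{T_p}$, observe that $\{z_1,\dots,z_d,-z_1,\dots,-z_d\}$ is exactly the order-$2d$ subgroup $H=\langle g^{(n-1)/(2d)}\rangle$ containing $-1$, and invoke Lagrange's theorem so that the norm is constant on each of the $(n-1)/(2d)$ cosets, with the same pair count per coset. Your explicit verification that the $2d$ entries are distinct and that the folded coordinates depend only on the coset (via the inverse-pair decomposition of $kH$) is a slightly more careful rendering of steps the paper states without elaboration.
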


As shown in Theorem~\ref{construction}, our method can construct regular rank-1 lattice through a very simple closed-form construction, which does not require any exhaustive computer search.

\subsection{Regular Property of Rank-1 Lattice}

We show the regular property of rank-1 lattices in terms of $l_p$-norm-based toroidal distance.

\begin{theorem}
\label{Bounds}
Suppose $n$ is a prime number and $n \ge 2d+1$.  Let $  \boldsymbol{z} = [ z_1,z_2, \cdots, z_d  ] $ with $1 \le z_k \le n-1 $.  Construct a rank-1 lattice $X=\{\boldsymbol{x}_0,\cdots,\boldsymbol{x}_{n-1}\}$ with $\boldsymbol{x}_i = \frac{i \boldsymbol{z}  \; \text{mod} \;n}{n}, i \in \{0,...,n-1\}  $ and $z_i \ne z_j$ . Then, the minimum pairwise toroidal distance can be bounded as 
\begin{align}
&  \frac{d(d+1)}{2n}  \le \min_{  i,j \in \{0,\cdots,n-1\} ,  i \ne j} \|{\bf{x}}_i - {\bf{x}}_j \|_{T_1} \le \frac{(n+1)d}{4n}  \\
 &   \frac{\sqrt{6d(d+1)(2d+1)}}{6n}  \le \min_{  i,j \in \{0,\cdots,n-1\} ,  i \ne j} \|{\bf{x}}_i - {\bf{x}}_j \|_{T_2}\le  \sqrt{\frac{(n+1)d}{12n} },
\end{align}
where $\|\cdot\|_{T_1}$ and  $\|\cdot\|_{T_2}$ denote the $l_1$-norm-based toroidal distance and the $l_2$-norm-based toroidal distance, respectively.
\end{theorem}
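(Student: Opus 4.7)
The plan is to reduce everything to a single index using translation invariance. For any $i\ne j$, $\boldsymbol{x}_i-\boldsymbol{x}_j \equiv (i-j)\boldsymbol{z}/n \pmod 1$, so $\min_{i\ne j}\|\boldsymbol{x}_i-\boldsymbol{x}_j\|_{T_p}$ equals $\min_{1\le k\le n-1}\|\boldsymbol{x}_k\|_{T_p}$. Writing $a_{i,k}=\min(iz_k\bmod n,\,(-iz_k)\bmod n)\in\{1,\dots,(n-1)/2\}$, one has $\|\boldsymbol{x}_i\|_{T_p}^p = n^{-p}\sum_{k=1}^d a_{i,k}^p$, so the entire problem reduces to bounding the quantities $\sum_k a_{i,k}$ and $\sum_k a_{i,k}^2$ from below (pointwise in $i$) and from above (in the best case over $i$).

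For the \textbf{lower bound}, I would argue that for every fixed $i\in\{1,\dots,n-1\}$ the integers $a_{i,1},\dots,a_{i,d}$ are pairwise distinct. A collision $a_{i,k_1}=a_{i,k_2}$ forces $iz_{k_1}\equiv \pm iz_{k_2}\pmod n$; since $n$ is prime and $\gcd(i,n)=1$, this is equivalent to $z_{k_1}=z_{k_2}$ or $z_{k_1}+z_{k_2}\equiv 0\pmod n$, both excluded by the (non-degenerate) distinctness hypothesis on $\boldsymbol{z}$. The hypothesis $n\ge 2d+1$ gives $(n-1)/2\ge d$, so these $d$ distinct positive integers fit inside $\{1,\dots,(n-1)/2\}$, whence
\[
\sum_{k=1}^d a_{i,k}\ \ge\ 1+2+\cdots+d \ =\ \frac{d(d+1)}{2}, \qquad \sum_{k=1}^d a_{i,k}^2\ \ge\ 1^2+\cdots+d^2\ =\ \frac{d(d+1)(2d+1)}{6}.
\]
Dividing by $n$ (respectively by $n^2$ and taking a square root) produces the two claimed lower bounds.

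For the \textbf{upper bound} I would use an averaging argument. Fixing coordinate $k$, the map $i\mapsto iz_k\bmod n$ is a bijection of $\{1,\dots,n-1\}$, so
\[
\sum_{i=1}^{n-1}a_{i,k}^p \ =\ \sum_{j=1}^{n-1}\min(j,n-j)^p\ =\ 2\sum_{j=1}^{(n-1)/2} j^p,
\]
which by the standard identities for $\sum j$ and $\sum j^2$ evaluates to $(n^2-1)/4$ for $p=1$ and $n(n^2-1)/12$ for $p=2$. Summing over the $d$ coordinates and dividing by $n-1$ shows that the mean over $i\in\{1,\dots,n-1\}$ of $\sum_k a_{i,k}$ is $d(n+1)/4$, and the mean of $\sum_k a_{i,k}^2$ is $dn(n+1)/12$. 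Since the minimum of a finite non-negative family is at most its mean, normalizing by $n$ and by $n^2$ (then taking a square root in the $T_2$ case, which is monotone on non-negatives) delivers the stated upper bounds.

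The \textbf{main obstacle} is the distinctness step for the lower bound. Without the implicit non-degeneracy assumption that no two components satisfy $z_{k_1}+z_{k_2}\equiv 0\pmod n$, the bound $\sum_k a_{i,k}\ge d(d+1)/2$ genuinely fails: e.g.\ $n=5$, $d=2$, $\boldsymbol{z}=(1,4)$ gives $a_{1,1}=a_{1,2}=1$ and $\|\boldsymbol{x}_1\|_{T_1}=2/5<3/5$. So one must read ``$z_i\ne z_j$'' as identifying $z$ with $n-z$ (the ``non-degenerate'' case mentioned in the abstract). After this is pinned down, the rest is elementary arithmetic plus the min-$\le$-mean inequality.
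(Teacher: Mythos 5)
Your proof follows essentially the same route as the paper's: translation invariance reduces the minimum pairwise distance to $\min_{1\le k\le n-1}\|\boldsymbol{x}_k\|_{T_p}$; the upper bounds come from a minimum-at-most-mean averaging argument using the fact that $k\mapsto kz_t \;\text{mod}\; n$ permutes $\{1,\dots,n-1\}$ (the paper routes the $T_2$ case through Cauchy--Schwarz on $\sum_k\|\boldsymbol{x}_k\|_{T_2}$ while you average $\sum_k a_{i,k}^2$ directly, but both are valid and give the same constant); and the lower bounds come from the folded coordinates being $d$ distinct integers in $\{1,\dots,\frac{n-1}{2}\}$, which is where $n\ge 2d+1$ enters. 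Your closing remark is correct and is in fact a more careful reading than the paper's own lower-bound step: the paper only notes that the entries $kz_t \;\text{mod}\; n$ are distinct and then asserts the minimizing configuration is $[1/n,\dots,d/n]$, which silently requires $z_i\not\equiv -z_j \pmod n$ in addition to $z_i\ne z_j$; your example $n=5$, $\boldsymbol{z}=(1,4)$ does violate the stated lower bound, so the non-degeneracy hypothesis must indeed be read modulo the identification $z\sim n-z$, exactly as you say.
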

Theorem~\ref{Bounds} gives the upper and lower bounds of the minimum pairwise distance of any  non-degenerate rank-1 lattice.  The term `non-degenerate' means that  the elements   in the generating vector are not equal, i.e., $z_i \ne z_j$.

We now show that our subgroup-based rank-1 lattice can achieve the optimal minimum pairwise distance when $n=2d+1$ is a prime number.

\begin{corollary}
\label{optimalCor}
Suppose $n=2d+1$ is a prime number. Let $g$ be a  primitive root of $n$. Let $  \boldsymbol{z} = [ g^0,g^{\frac{n-1}{2d}}, g^{\frac{2(n-1)}{2d}},\cdots, g^{\frac{(d-1)(n-1)}{2d} } ] \; \text{mod} \; n $. Construct rank-1 lattice $X=\{\boldsymbol{x}_0,\cdots,\boldsymbol{x}_{n-1}\}$ with $\boldsymbol{x}_i = \frac{i \boldsymbol{z}  \; \text{mod} \;n}{n}, i \in \{0,...,n-1\}  $. Then,  the pairwise toroidal distance of the lattice $X$  attains the upper bound. 
\begin{align}
    &  \|{\bf{x}}_i - {\bf{x}}_j \|_{T_1} = \frac{(n+1)d}{4n} ,  \forall i,j \in \{0,\cdots,n-1\} ,  i \ne j,  \\
 &  \|{\bf{x}}_i - {\bf{x}}_j \|_{T_2} = \sqrt{\frac{(n+1)d}{12n} } ,  \forall i,j \in \{0,\cdots,n-1\} ,  i \ne j . 
\end{align}

\end{corollary}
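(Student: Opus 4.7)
The plan is to apply Theorem~\ref{construction} in the boundary regime $n=2d+1$ and then identify the single common distance value. First, observe that $2d\mid (n-1)$ with $(n-1)/(2d)=1$, so Theorem~\ref{construction} immediately yields exactly one distinct pairwise toroidal distance value across $X$. By translation invariance of the rank-1 lattice, this common value equals $\|\boldsymbol{x}_k\|_{T_p}$ for any $k\in\{1,\ldots,n-1\}$, so the task reduces to a single distance computation.

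Next I would specialize the generating vector: with $(n-1)/(2d)=1$ we have $\boldsymbol{z}=[g^0,g^1,\ldots,g^{d-1}]\bmod n$, and using $g^d\equiv-1\pmod n$ the set $\{z_1,\ldots,z_d,-z_1,\ldots,-z_d\}\bmod n=\{g^0,g^1,\ldots,g^{2d-1}\}\bmod n$ equals the entire multiplicative group $\{1,\ldots,n-1\}$. Writing any nonzero $i\equiv g^m\pmod n$, the components of $i\boldsymbol{z}\bmod n$ are $\{g^{m+j}:j=0,\ldots,d-1\}$ and those of $-i\boldsymbol{z}\bmod n$ are $\{g^{m+d+j}:j=0,\ldots,d-1\}$, so together they still exhaust $\{1,\ldots,n-1\}$.

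The core combinatorial step is: for each $v\in\{1,\ldots,d\}$, exactly one of $v$ and $n-v$ appears among the components of $i\boldsymbol{z}\bmod n$. If $v=g^a\bmod n$ then $n-v=g^{a+d}\bmod n$, so the claim reduces to the observation that a shift by $d$ modulo $2d$ swaps any $d$-element window $\{m,\ldots,m+d-1\}$ with its complement in $\mathbb{Z}/(2d)$. Consequently the elementwise minima $\{\min(iz_j\bmod n,\,n-iz_j\bmod n)\}_{j=1}^d$ are exactly the set $\{1,2,\ldots,d\}$, each value occurring once.

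Finally, a direct computation gives $\|\boldsymbol{x}_i\|_{T_1}=\tfrac{1}{n}\sum_{v=1}^d v=\tfrac{d(d+1)}{2n}$ and $\|\boldsymbol{x}_i\|_{T_2}^2=\tfrac{1}{n^2}\sum_{v=1}^d v^2=\tfrac{d(d+1)(2d+1)}{6n^2}$. Substituting $n=2d+1$ into the upper bounds from Theorem~\ref{Bounds} yields $\tfrac{(n+1)d}{4n}=\tfrac{d(d+1)}{2n}$ and $\tfrac{(n+1)d}{12n}=\tfrac{d(d+1)(2d+1)}{6n^2}$, matching the computed values and thus showing that every pair attains the bound. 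The main obstacle, and really the only nontrivial step, is the small combinatorial observation that shifting by $d$ on $\mathbb{Z}/(2d)$ complements any length-$d$ interval; everything else is a direct invocation of Theorem~\ref{construction} followed by elementary sum formulas.
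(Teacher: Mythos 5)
Your proof is correct, and it reaches the conclusion by a somewhat different route than the paper at the key step. Both you and the paper start the same way: Theorem~\ref{construction} with $(n-1)/(2d)=1$ gives a single distinct distance value, and the difference property reduces everything to $\|\boldsymbol{x}_k\|_{T_p}$ for nonzero $k$. The paper then identifies that common value by an averaging argument: it reuses the sums $\sum_k\|\boldsymbol{x}_k\|_{T_1}$ and $\sum_k\|\boldsymbol{x}_k\|_{T_2}^2$ computed in the proof of Theorem~\ref{Bounds} (via the permutation property of $kz_t \bmod n$), notes the common value must equal the mean, and for the $T_2$ case exploits that Cauchy--Schwarz holds with equality because all the norms coincide. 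You instead compute a single $\boldsymbol{x}_k$ exactly: since $\boldsymbol{z}=[g^0,\dots,g^{d-1}]$ and $g^d\equiv-1$, the exponents of the coordinates of $i\boldsymbol{z}$ form a length-$d$ window in $\mathbb{Z}/2d$, and shifting by $d$ (i.e., negation mod $n$) swaps the window with its complement, so the toroidal minima $\min(iz_j\bmod n,\,n-iz_j\bmod n)$ are exactly $\{1,\dots,d\}$, each once; the sums $\sum_{v=1}^d v$ and $\sum_{v=1}^d v^2$ then give the stated values after substituting $n=2d+1$. Your version buys a sharper structural fact (every nonzero lattice point's coordinates are, up to toroidal reflection, a permutation of $\{1/n,\dots,d/n\}$) and makes the appeal to Theorem~\ref{construction} essentially optional, since the single-value property falls out of the computation itself; the paper's version is shorter given that the sum formulas were already established, and avoids the window-complement combinatorics. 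Both are valid.
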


Corollary~\ref{optimalCor} shows a case when our subgroup rank-1 lattice obtains the maximum minimum pairwise toroidal distance. It is useful for expensive black-box functions, where the number of function queries is small. Empirically, we find that our subgroup rank-1 lattice can achieve near-optimal pairwise toroidal distance in many other cases.

\section{QMC for Kernel Approximation}

Another application of our subgroup rank-1 lattice is kernel approximation. Kernel approximation has been widely studied. A random feature maps is a promising way for kernel approximation.   Rahimi et al. study the shift-invariant kernels by Monte Carlo sampling~\cite{rahimi2008random}.  Yang et al. suggest employing QMC for kernel approximation~\cite{yang2014quasi,avron2016quasi}. Several previous methods work on the construction of structured feature maps for kernel approximation~\cite{le2013fastfood,choromanski2016recycling,lyu2017spherical}. Apart from other kernel approximation methods designed for specific kernels, QMC can serve as a plug-in for any integral representation of kernels to improve kernel approximation. We include this section to be self-contained.

From Bochner's Theorem, shift invariant kernels can be expressed as an  integral~\cite{rahimi2008random} 
\begin{equation}
\label{K}
\begin{array}{l}
{\rm{K}}(\boldsymbol{x},\boldsymbol{y}) =  \int_{{{\rm{\mathbb{R}}}^d}} {{e^{ - i{{(\boldsymbol{x} - \boldsymbol{y})}^\top}\bf{w}}}p(\bf{w})d\bf{w}},
\end{array}
\end{equation}
where $i = \sqrt{-1}$, and $p(\bf{w})$ is a probability density. $ p({\bf{w}}) = p(- {\bf{w}}) \geq 0 $   ensure the imaginary parts of the integral vanish.
Eq.(\ref{K}) can be rewritten as 
\begin{align}
{\rm{K}}(\boldsymbol{x},\boldsymbol{y}) = \int _{[0,1]^d}  {{e^{ - i{{(\boldsymbol{x} - \boldsymbol{y})}^\top}\Phi^{-1}(\boldsymbol{\epsilon})}} d\boldsymbol{\epsilon}}.
\end{align}
We can approximate the integral~Eq.(\ref{K}) by using our subgroup rank-1 lattice according to the QMC approximation in \cite{yang2014quasi,tompkins2019black}
\begin{align}
{\rm{K}}(\boldsymbol{x},\boldsymbol{y}) & = \int _{[0,1]^d}  {{e^{ - i{{(\boldsymbol{x} - \boldsymbol{y})}^\top}\Phi^{-1}(\boldsymbol{\epsilon})}} d\boldsymbol{\epsilon}}  \approx \frac{1}{n} \sum  _{i=1} ^{n} {e^{ - i{{(\boldsymbol{x} - \boldsymbol{y})}^\top}\Phi^{-1}(\boldsymbol{\epsilon}_i)}}  = \left< \Psi(\boldsymbol{x}), \Psi(\boldsymbol{y}) \right>,
\end{align}
where $\Psi(\boldsymbol{x}) =\frac{1}{\sqrt{n}} \left[e^{-i\boldsymbol{x}^\top \Phi^{-1}(\boldsymbol{\epsilon}_1) },\cdots,e^{-i\boldsymbol{x}^\top \Phi^{-1}(\boldsymbol{\epsilon}_n) } \right]$ is the feature map of  the input  $\boldsymbol{x}$.

\vspace{-2mm}
\section{Experiments}
\label{experiments}
\vspace{-1mm}

In this section, we first evaluate the minimum distance generated by our subgroup rank-1 lattice in section~\ref{distance}. We then evaluate the subgroup rank-1 lattice on integral approximation tasks and kernel approximation task in section~\ref{testInapp} and~\ref{kerApp}, respectively. 



\begin{table}[t]
\centering
\begin{footnotesize}
\caption{Minimum  $l_1$-norm-based toroidal distance of rank-1 lattice constructed by different methods.}
\label{tab:l1normD}
\resizebox{\columnwidth}{!}{
\begin{tabular}{ccccccccccccc}
\toprule
\multirow{4}{*}{\begin{tabular}[c]{@{}l@{}} d=50 \end{tabular}}
&  & n=101             & 401             & 601             & 701             & 1201    & 1301            & 1601            & 1801            & 1901            & 2801         \\ 
& SubGroup & \textbf{12.624} & \textbf{11.419} & \textbf{11.371} & \textbf{11.354} & \textbf{11.029}  & \textbf{10.988} & 10.541          & 10.501          & 10.454          & \textbf{10.748} \\ 
& Hua~\cite{hua2012applications}      & 10.426          & 10.421          & 9.8120          & 10.267          & 10.074        & 9.3982          & 9.5890          & 9.5175          & 8.9868          & 9.2260      \\ 
& Korobov~\cite{korobov1960properties}  & \textbf{12.624} & \textbf{11.419} & \textbf{11.371} & \textbf{11.354} & \textbf{11.029} & \textbf{10.988} & \textbf{10.665} & \textbf{10.561} & \textbf{10.701} & \textbf{10.748} \\ 
                            
\midrule

\multirow{4}{*}{\begin{tabular}[c]{@{}l@{}} d=100 \end{tabular}}
          &        &401             &601             &1201            &1601            &1801        &2801            &3001            &4001            &4201            &4801         \\ 
& SubGroup & \textbf{24.097} & \textbf{23.760} & 22.887          & \textbf{23.342} & 22.711       & \textbf{23.324} & 22.233          & \textbf{22.437} & 22.573          & 21.190      \\ 
& Hua~\cite{hua2012applications}      & 21.050          & 21.251          & 21.205          & 20.675          & 19.857       & 20.683          & 20.700          & 19.920          & 19.967          & 20.574       \\ 
& Korobov~\cite{korobov1960properties}  & \textbf{24.097} & \textbf{23.760} & \textbf{23.167} & \textbf{23.342} & \textbf{22.893}  & \textbf{23.324} & \textbf{22.464} & \textbf{22.437} & \textbf{22.573} & \textbf{22.188}  \\ 

\midrule

\multirow{4}{*}{\begin{tabular}[c]{@{}l@{}} d=200 \end{tabular}}
          &     &401             & 1201            & 1601            & 2801            & 4001      & 4801            & 9601            & 12401           & 14401           & 15601          \\ 
& SubGroup & \textbf{50.125} & \textbf{48.712} & 47.500          & 47.075          & \textbf{47.810} & 45.957          & 45.819          & \textbf{46.223} & 43.982          & \textbf{45.936} \\ 
& Hua~\cite{hua2012applications}      & 43.062          & 43.057          & 43.052          & 43.055          & 43.053   & 43.055          & 43.053          & 42.589          & 42.558          & 42.312        \\ 
& Korobov~\cite{korobov1960properties}  & \textbf{50.125} & \textbf{48.712} & \textbf{47.660} & \textbf{47.246} & \textbf{47.810}  & \textbf{46.686} & \textbf{46.154} & \textbf{46.223} & \textbf{45.949} & \textbf{45.936} \\ 

\midrule

\multirow{4}{*}{\begin{tabular}[c]{@{}l@{}} d=500 \end{tabular}}
      &     & 3001            & 4001            & 7001            & 9001            & 13001       & 16001           & 19001           & 21001           & 24001           & 28001         \\  
& SubGroup & \textbf{121.90} & \textbf{121.99} & 119.60          & 118.63          & \textbf{120.23} & \textbf{119.97} & 116.41          & \textbf{120.56} & \textbf{120.24} & 113.96   \\ 
& Hua~\cite{hua2012applications}      & 108.33          & 108.33          & 108.33          & 108.33          & 108.33      & 108.33          & 108.33          & 108.33          & 108.33          & 108.33       \\ 
& Korobov~\cite{korobov1960properties}  & \textbf{121.90} & \textbf{121.99} & \textbf{120.46} & \textbf{120.16} & \textbf{120.23}  & \textbf{119.97} & \textbf{119.41} & \textbf{120.56} & \textbf{120.24} & \textbf{118.86} \\ 

 \bottomrule
\end{tabular}
}
\end{footnotesize}
\end{table}
\vspace{-2mm}


\begin{table}[t]
\centering
\begin{footnotesize}
\caption{Minimum  $l_2$-norm-based toroidal distance of rank-1 lattice constructed by different methods.}
\label{tab:l2normD}
\resizebox{\columnwidth}{!}{
\begin{tabular}{ccccccccccccc}
\toprule
\multirow{4}{*}{\begin{tabular}[c]{@{}l@{}} d=50 \end{tabular}}
                          &          &  n=101             &  401             &  601             &  701             &  1201     &  1301            &  1601            &  1801            &  1901            &  2801           \\ 
& SubGroup & \textbf{2.0513} & \textbf{1.9075} & \textbf{1.9469} & \textbf{1.9196} & \textbf{1.8754} & 1.8019          & 1.8008          & \textbf{1.8709} & 1.7844          & 1.7603  \\  
& Hua~\cite{hua2012applications}      & 1.7862          & 1.7512          & 1.7293          & 1.7049          & 1.7326    & 1.6295          & 1.6659          & 1.6040          & 1.5629          & 1.5990       \\  
& Korobov~\cite{korobov1960properties}  & \textbf{2.0513} & \textbf{1.9075} & \textbf{1.9469} & \textbf{1.9196} & \textbf{1.8754}  & \textbf{1.8390} & \textbf{1.8356} & \textbf{1.8709} & \textbf{1.8171} & \textbf{1.8327}  \\ 
                            
\midrule

\multirow{4}{*}{\begin{tabular}[c]{@{}l@{}} d=100 \end{tabular}}
                  &     & 401             &  601             & 1201            & 1601            & 1801      & 2801            & 3001            & 4001            & 4201            & 4801         \\ 
& SubGroup & \textbf{2.8342} & \textbf{2.8143} & 2.7077          & \textbf{2.7645} & \textbf{2.7514} & 2.6497          & 2.6337          & 2.6410          & 2.6195          & 2.5678   \\ 
& Hua~\cite{hua2012applications}      & 2.5385          & 2.5739          & 2.4965          & 2.4783          & 2.4132       & 2.5019          & 2.4720          & 2.4138          & 2.4537          & 2.4937     \\ 
& Korobov~\cite{korobov1960properties}  & \textbf{2.8342} & \textbf{2.8143} & \textbf{2.7409} & \textbf{2.7645} & \textbf{2.7514} & \textbf{2.6956} & \textbf{2.6709} & \textbf{2.6562} & \textbf{2.6667} & \textbf{2.6858}  \\ 

\midrule

\multirow{4}{*}{\begin{tabular}[c]{@{}l@{}} d=200 \end{tabular}}
               &     &401             &1201            & 1601            &2801            & 4001     &4801            &9601            &12401           &14401           &15601         \\ 
& SubGroup & \textbf{4.0876} & \textbf{3.9717} & \textbf{3.9791} & 3.8425          & \textbf{3.9276} & 3.8035          & 3.7822          & \textbf{3.8687} & 3.6952          & 3.8370    \\ 
& Hua~\cite{hua2012applications}      & 3.7332          & 3.7025          & 3.6902          & 3.6944          & 3.7148     & 3.6936          & 3.6571          & 3.5625          & 3.6259          & 3.5996          \\ 
& Korobov~\cite{korobov1960properties}  & \textbf{4.0876} & \textbf{3.9717} & \textbf{3.9791} & \textbf{3.9281} & \textbf{3.9276} & \textbf{3.9074} & \textbf{3.8561} & \textbf{3.8687} & \textbf{3.8388} & \textbf{3.8405} \\ 

\midrule

\multirow{4}{*}{\begin{tabular}[c]{@{}l@{}} d=500 \end{tabular}}
            &     &3001            &4001            &7001            &9001            &13001      &16001           &19001           &21001           &24001           &28001        \\ 
& SubGroup & \textbf{6.3359} & \textbf{6.3769} & 6.3141          & 6.2131          & \textbf{6.2848} & 6.2535          & 6.0656          & \textbf{6.2386} & \textbf{6.2673} & 6.1632  \\ 
& Hua~\cite{hua2012applications}      & 5.9216          & 5.9216          & 5.9215          & 5.9215          & 5.9216      & 5.9216          & 5.9215          & 5.9215          & 5.8853          & 5.9038           \\ 
& Korobov~\cite{korobov1960properties}  & \textbf{6.3359} & \textbf{6.3769} & \textbf{6.3146} & \textbf{6.2960} & \textbf{6.2848}   & \textbf{6.2549} & \textbf{6.2611} & \textbf{6.2386} & \textbf{6.2673} & \textbf{6.2422} \\  

 \bottomrule
\end{tabular}
}
\end{footnotesize}
\end{table}
\vspace{-2mm}

\begin{figure}[t]
\centering
\subfigure[\scriptsize{ 50-d Integral Approximation }]{
\label{All50}
\includegraphics[width=0.224\linewidth]{./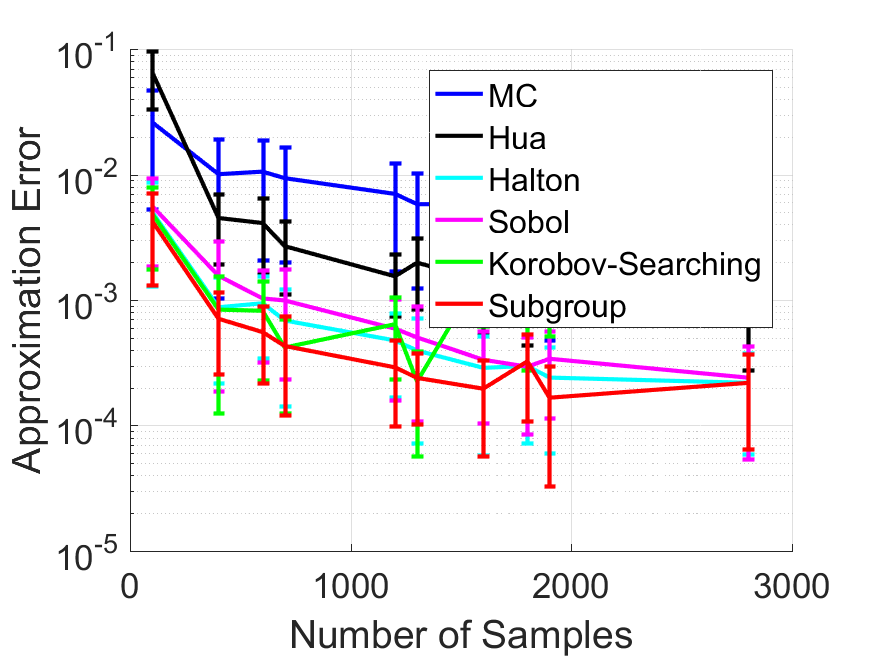}}
\subfigure[\scriptsize{ 100-d Integral Approximation }]{
\label{All50}
\includegraphics[width=0.224\linewidth]{./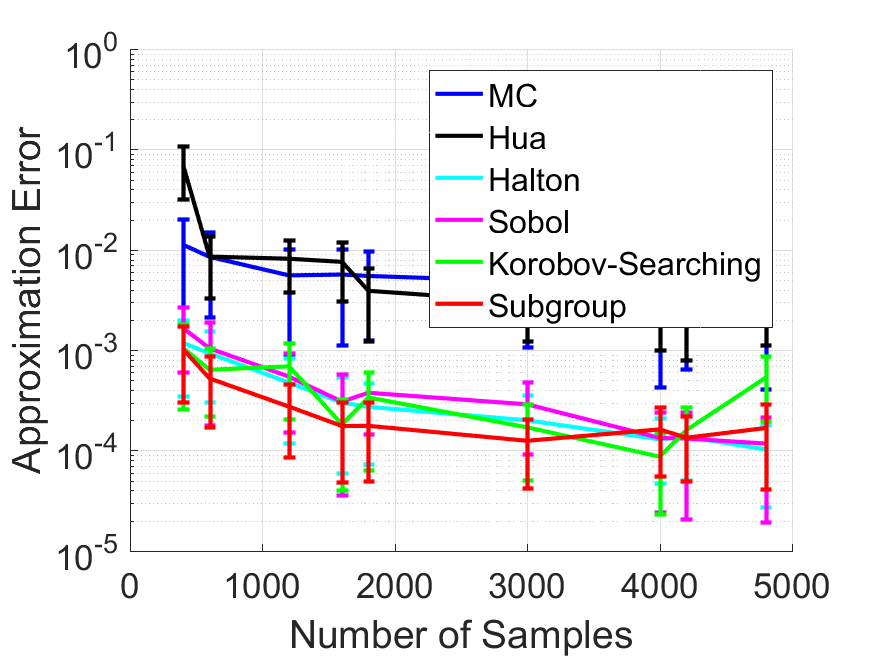}}
\subfigure[\scriptsize{ 500-d Integral Approximation }]{
\label{All500}
\includegraphics[width=0.224\linewidth]{./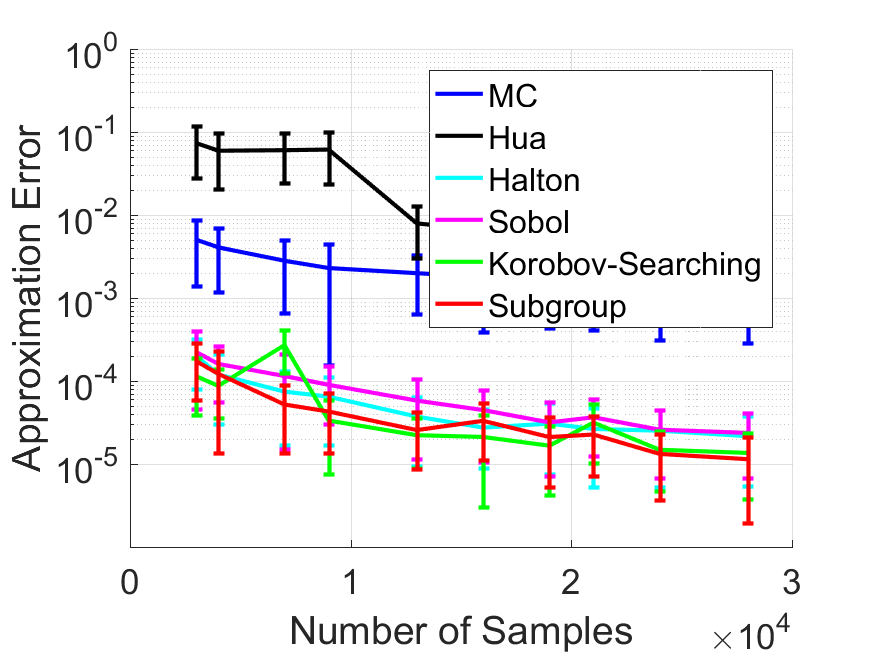}}
\subfigure[\scriptsize{ 1000-d Integral Approximation }]{
\label{All100}
\includegraphics[width=0.224\linewidth]{./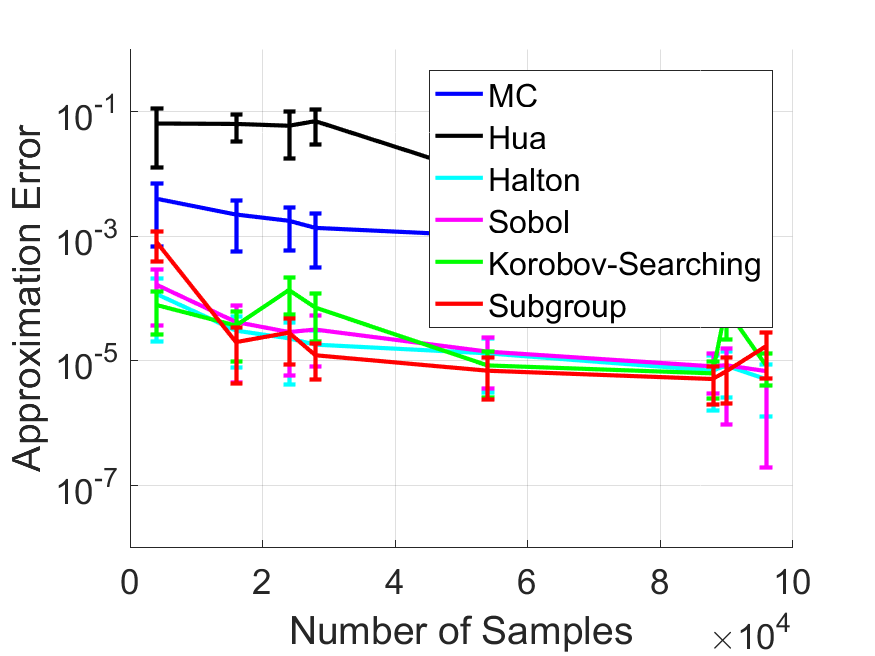}}
\caption{Mean approximation error  over 50 independent runs.error bars are with in $1\times$ std.}
\label{fig_App}
\vspace{-2mm}
\end{figure}

\subsection{Evaluation of the minimum distance}
\label{distance}
\vspace{-2mm}

We evaluate the minimum distance of our subgroup rank-1 lattice by comparing with Hua's method~\cite{hua2012applications} and the Korobov~\cite{korobov1960properties} searching method. We denote `SubGroup' as our subgroup rank-1 lattice, `Hua' as rank-1 lattice constructed by Hua's method~\cite{hua2012applications}, and `Korobov' as rank-1 lattice constructed by exhaustive computer search in Korobov form~\cite{korobov1960properties}.  

We set the dimension $d$ as in $\{50,100,200,500\}$. For each dimension $d$, we set the number of points $n$ as the first ten prime numbers such that $2d$ divides $n\!-\!1$, i.e., $2d \big | (n\!-\!1)$. The minimum $l_1$-norm-based toroidal distance and the minimum $l_2$-norm-based toroidal distance for each dimension are reported in Table~\ref{tab:l1normD} and  Table~\ref{tab:l2normD},  respectively.  The  larger the distance, the better. 

We can observe that our subgroup rank-1 lattice achieves consistently better (larger) minimum distances than Hua's method in all the cases. Moreover, we see that subgroup rank-1 lattice obtains, in 20 out of 40 cases, the same  $l_2$-norm-based toroidal distance and in 24 out of 40 cases the same  $l_1$-norm-based toroidal distance compared with the exhaustive computer search in Korobov form. The experiments show that our subgroup rank-1 lattice achieves the optimal toroidal distance in exhaustive computer searches in Korobov form in over half of all the cases.  Furthermore, the experimental result shows that our subgroup rank-1 lattice obtains a competitive distance compared with the exhaustive Korobov search in the remaining cases. Note that our subgroup rank-1 lattice is a closed-form construction which does not require computer search, making our method more appealing and simple to use. 

\textbf{ Time Comparison of Korobov searching and our sub-group rank-1 lattice. }
The table below shows the time cost (seconds) for lattice construction. The run time for Korobov searching grows fast to hours. Our method can run in less than one second, achieving a $10^4 \times$ to $10^5 \times$ speed-up. The speed-up increases when $n$ and $d$ becomes larger.  

\begin{table}[h]
\centering
\begin{footnotesize}
\label{tab:l1normD}
\resizebox{\columnwidth}{!}{
\begin{tabular}{ccccccccccccc}
\toprule

\multirow{4}{*}{\begin{tabular}[c]{@{}l@{}} d=500 \end{tabular}}
      &     & n=3001            & 4001            & 7001            & 9001            & 13001       & 16001           & 19001           & 21001           & 24001           & 28001         \\  
& SubGroup & 0.0185  &   0.0140  &   0.0289  &    0.043  &   0.0386      & 0.0320   &  0.0431  &   0.0548  &   0.0562  &   0.0593  \\ 
& Korobov  &  34.668   &     98.876    &   152.86      &  310.13   &    624.56  &      933.54    &   1308.9   &    1588.5      & 2058.5   &    2815.9 \\ 

\midrule

\multirow{4}{*}{\begin{tabular}[c]{@{}l@{}} d=1000 \end{tabular}}
      &     & n=4001  &     16001   &    24001   &    28001   &    54001        & 70001    &   76001   &    88001  &     90001  &     96001        \\  
& SubGroup & 0.0388  &    0.0618  &    0.1041   &   0.1289   &    0.2158  &  0.2923   &   0.3521   &   0.4099   &    0.5352   &  0.5663  \\ 
& Korobov  &  112.18    &   1849.4  &      4115.9   &    5754.6   &     20257   &   34842    &    43457     &   56798   &     56644    &    69323 \\ 

 \bottomrule
\end{tabular}
}
\end{footnotesize}
\vspace{-3mm}
\end{table}
\vspace{-3mm}

\subsection{Integral approximation}
\label{testInapp}

We evaluate our subgroup rank-1 lattice on the integration test problem 
\begin{align}
 &   f(\boldsymbol{x}):= \exp{\Big(c\sum_{j=1}^d x_jj^{-b} \Big) } \\
 &   I(f):= \int _{[0,1]^d}  {f(\boldsymbol{x})d\boldsymbol{x}} = \displaystyle\prod_{j=1}^{d} \frac{\exp(cj^{-b})-1}{cj^{-b}}.
\end{align}
We compare with i.i.d. Monte Carlo, a Hua's rank-1 lattice~\cite{hua2012applications}, Korobov searching rank-1 lattice~\cite{korobov1959approximate},  Halton sequence, and Sobol sequence ~\cite{dick2013high}.  For both  Halton sequence and Sobol sequence, we use the scrambling technique suggested in~\cite{dick2013high}. For all the QMC methods,  we use the random shift technique as in Eq.(\ref{randomshift}).


We fix $b=2$ and $c=1$ in all the experiments. We set dimension $d=100$ and $d=500$, respectively. We set the number of points $n$ as the first ten prime numbers such that $2d$ divides $n\!-\!1$, i.e., $2d \big | (n\!-\!1)$.

The mean approximation error ($\frac{|Q(f)-I(f)|}{|I(f)|}$) with error bars over 50 independent runs for each dimension $d$ is presented in Figure~\ref{fig_App}. 
 We can see that Hua's method obtains a smaller error than i.i.d Monte Carlo on the 50-d problem, however, it becomes worse than MC on 500-d and 1000-d problems. Moreover,  our subgroup rank-1 lattice  obtains a consistent smaller error on all the tested problems than Hua and MC.  In addition,  our subgroup rank-1 lattice achieves a slightly better performance than Halton, Sobol and  Korobov searching method.


\begin{figure}[t]
\centering
\subfigure[\scriptsize{$\frac{{{{\left\| {\widetilde K - K} \right\|}_F}}}{{{{\left\| K \right\|}_F}}}$ for Gaussian Kernel }]{
\label{fig2a_K_meanG}
\includegraphics[width=0.3\linewidth]{./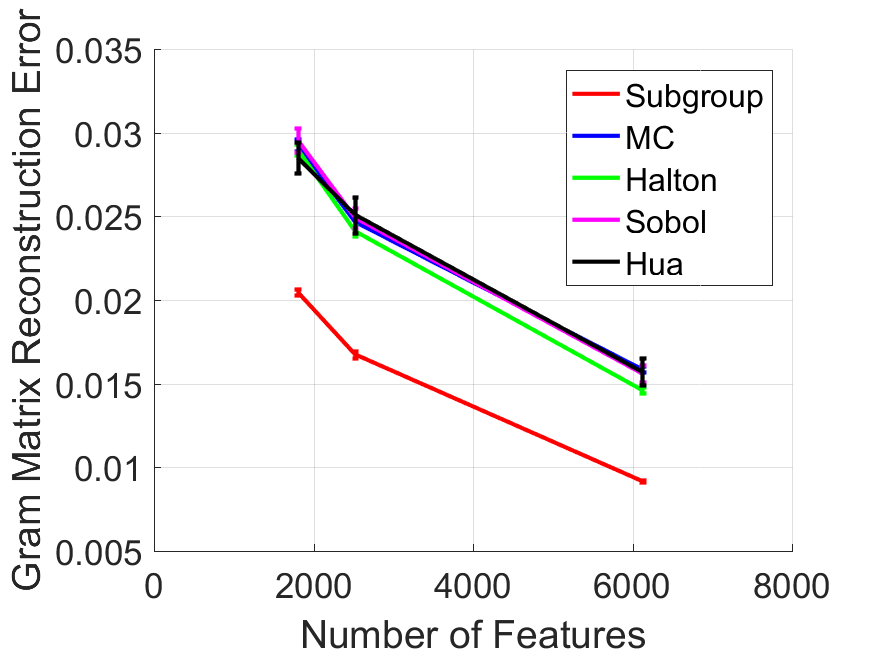}}
\subfigure[\scriptsize{$\frac{{{{\left\| {\widetilde K - K} \right\|}_F}}}{{{{\left\| K \right\|}_F}}}$ for First-order Arc Kernel}]{
\label{fig2c_K_acrCosine}
\includegraphics[width=0.3\linewidth]{./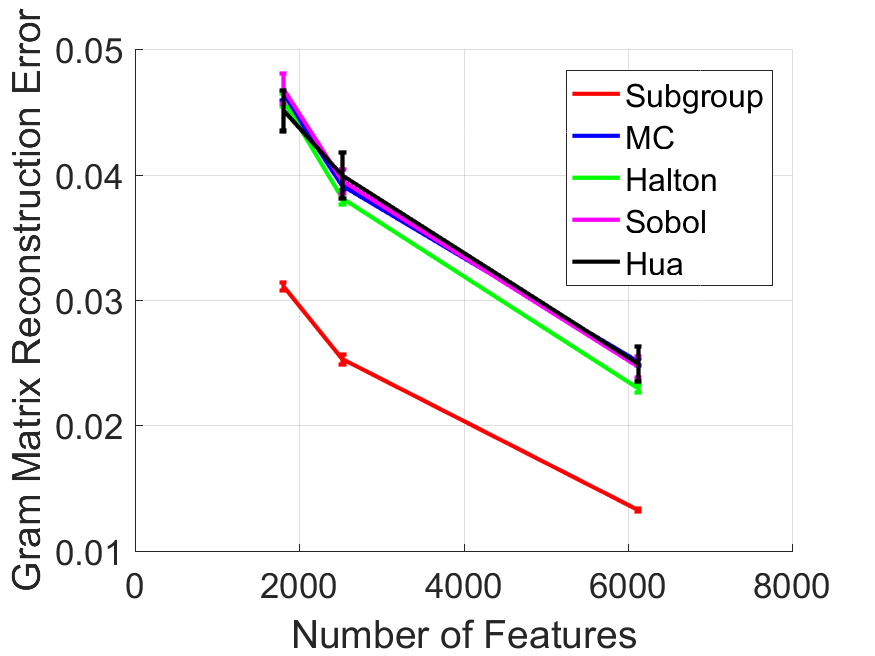}}
\subfigure[\scriptsize{$\frac{{{{\left\| {\widetilde K - K} \right\|}_F}}}{{{{\left\| K \right\|}_F}}}$ for  Zero-order Arc Kernel}]{
\label{fig2c_K_meanAngle}
\includegraphics[width=0.3\linewidth]{./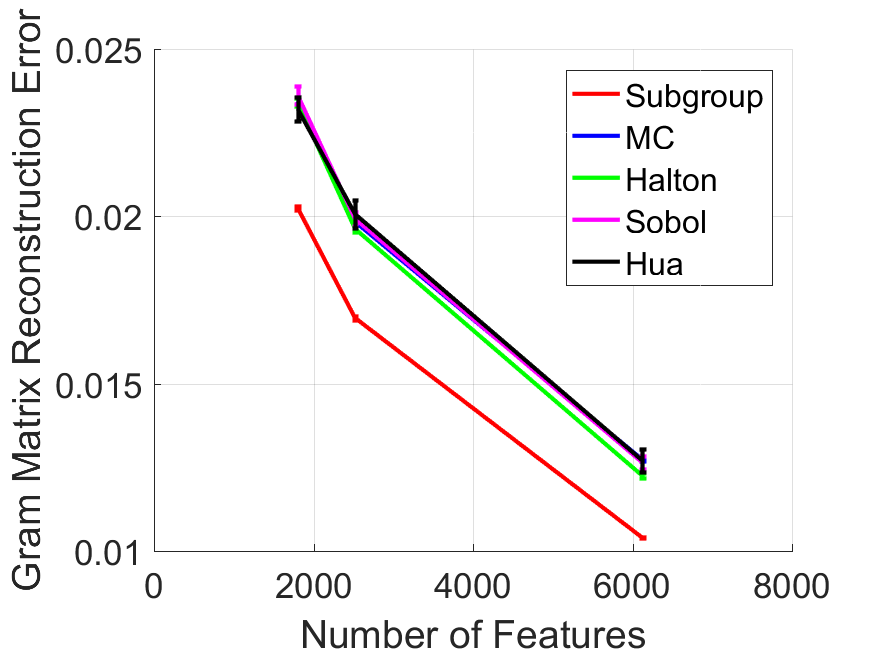}}
\subfigure[\scriptsize{$\frac{{{{\left\| {\widetilde K - K} \right\|}_\infty }}}{{{{\left\| K \right\|}_\infty }}}$ for Gaussian Kernel}]{
\label{fig2c_K_Max_Gaussian}
\includegraphics[width=0.3\linewidth]{./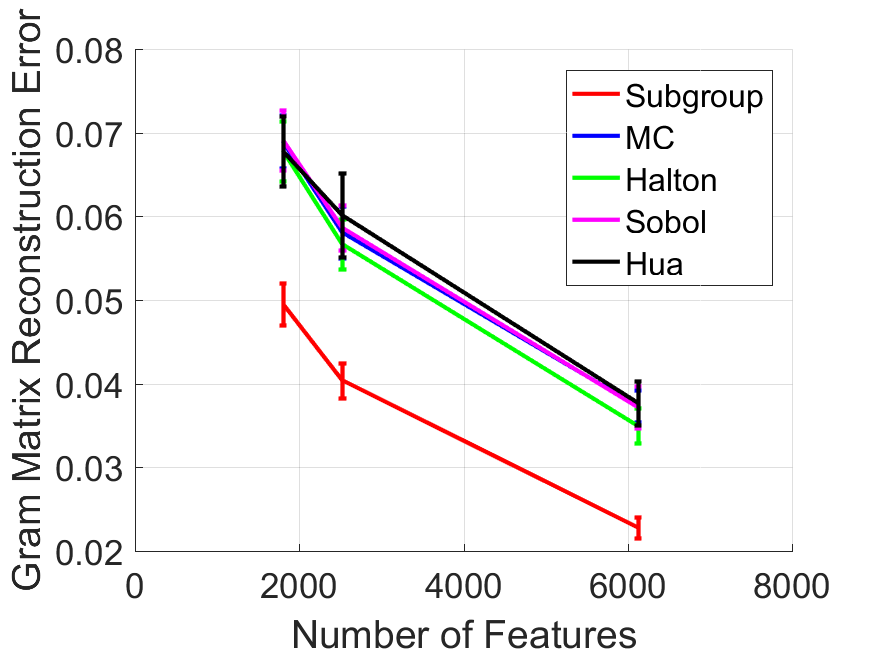}}
\subfigure[\scriptsize{$\frac{{{{\left\| {\widetilde K - K} \right\|}_\infty }}}{{{{\left\| K \right\|}_\infty }}}$  for First-order Arc Kerne}]{
\label{fig2c_K_Max_acrCosine}
\includegraphics[width=0.3\linewidth]{./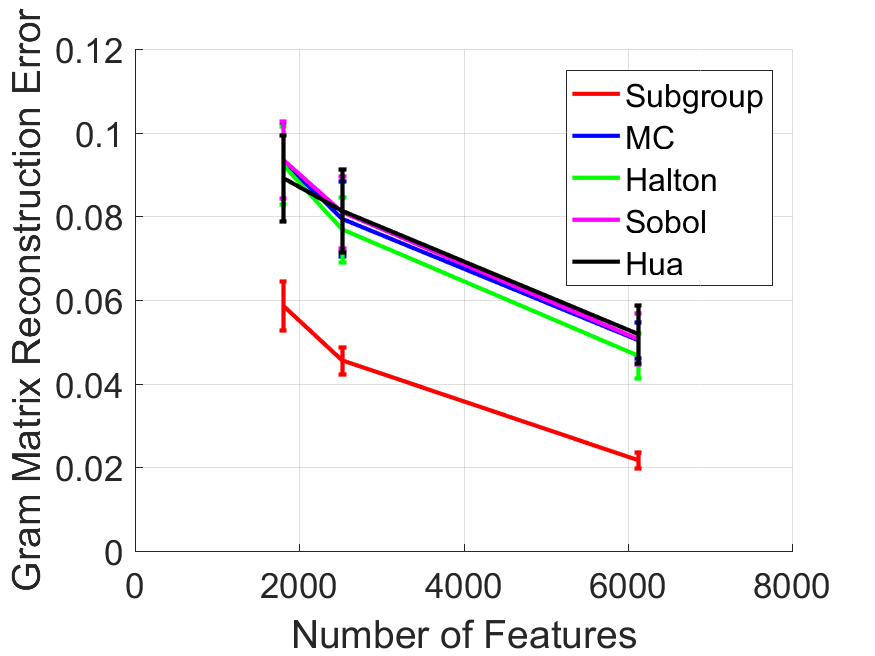}}
\subfigure[\scriptsize{$\frac{{{{\left\| {\widetilde K - K} \right\|}_\infty }}}{{{{\left\| K \right\|}_\infty }}}$  for Zero-order Arc Kernel}]{
\label{fig2c_K_maxAngle}
\includegraphics[width=0.3\linewidth]{./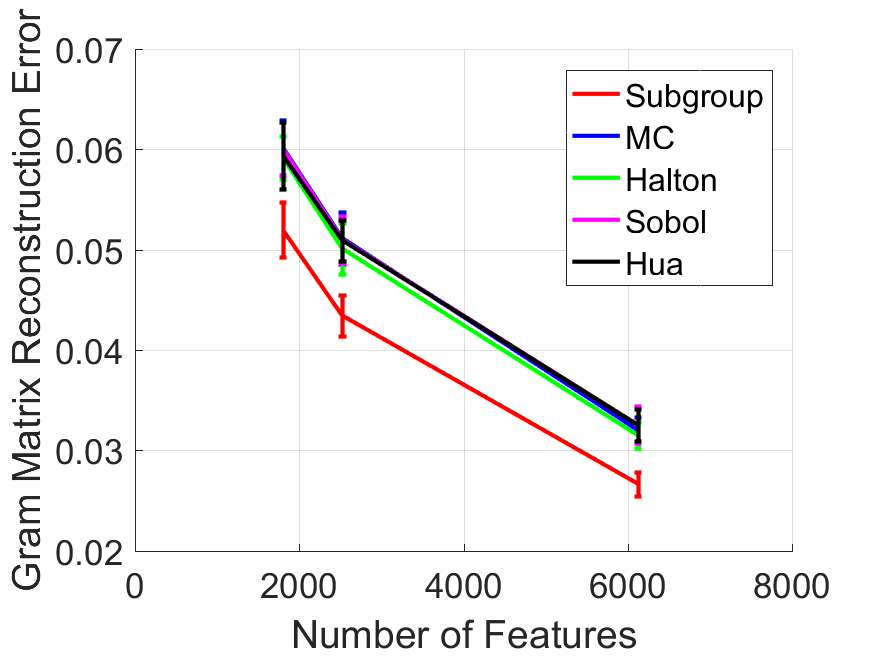}}
\caption{Relative Mean and Max Reconstruction Error for Gaussian, Zero-order and First-order Arc-cosine Kernel on DNA dataset. Error bars are within $1 \times$ std.}
\label{fig_KernelApp}
\end{figure}

\subsection{Kernel approximation}
\label{kerApp}

We evaluate the performance of subgroup rank-1 lattice on kernel approximation tasks by comparing with other QMC baseline methods. We test the kernel approximation of the Gaussian kernel, the zeroth-order arc-cosine kernel, and the first-order arc-cosine kernel as in~\cite{choromanski2016recycling}. 

We compare subgroup rank-1 lattice with a Hua's rank-1 lattice~\cite{hua2012applications}, Halton sequence,  Sobol sequence ~\cite{dick2013high} and standard i.i.d.  Monte Carlo sampling.  For both the Halton sequence and Sobol sequence, we use the scrambling technique suggested in~\cite{dick2013high}. For both  subgroup rank-1 lattice and Hua's rank-1 lattice, we use the random shift as in Eq.(\ref{randomshift}).
We evaluate the methods on the DNA~\cite{pourkamali2018randomized} and the SIFT1M~\cite{jegou2010product} dataset over 50 independent runs. Each run contains 2000 random  samples to construct the Gram matrix. The bandwidth  parameter of Gaussian kernel is set to 15 in all the experiments.

The mean Frobenius norm  approximation error (${\| {\widetilde K \!-\! K} \|}_F /{\left\| K \right\|}_F  $) and maximum norm approximation error  (${\| {\widetilde K \!-\! K} \|}_\infty / {\| K \|}_\infty$) with error bars on DNA~\cite{pourkamali2018randomized}  dataset  are plotted in Figure~\ref{fig_KernelApp}. The results on  SIFT1M~\cite{jegou2010product} is given  in Figure~6 in the supplement. 
The experimental result shows that subgroup rank-1 lattice consistently  obtains a smaller approximation error compared with other baselines. 
\subsection{Approximation on Graphical Model}
\vspace{-2mm}

For general Boltzmann machines with continuous state in $[0,1]$, the energy function of $\boldsymbol{x} \in [0,1]^d $ is defined as $E(\boldsymbol{x})=  -(\boldsymbol{x}^\top\boldsymbol{W}\boldsymbol{x} + \boldsymbol{b}^\top \boldsymbol{x})/d $. The normalization constant is $Z= \int _{[0,1]^d} { \exp{(-E(\boldsymbol{x})} )  d\boldsymbol{x}  }  $.  For inference, the marginal likelihood of  observation $\boldsymbol{v} \in \mathbb{R}^d$ is 
$\mathcal{L}(\boldsymbol{v})= \int _{[0,1]^d} { \exp{(-f(\boldsymbol{v}))} \exp{(-E(\boldsymbol{h})} )/Z \text{d}\boldsymbol{h}  } $ with function $f(\boldsymbol{v}) = -(\boldsymbol{v}^\top\boldsymbol{W}_v\boldsymbol{v} + 2 \boldsymbol{v}^\top\boldsymbol{W}_{h} \boldsymbol{h}  +\boldsymbol{b}_v^\top \boldsymbol{v})/d $, where $\boldsymbol{h} \in \mathbb{R}^d$ denotes the hidden states.

We  evaluate our method on approximation of the normalization constant and inference by comparing with i.i.d. Monte Carlo (MC), slice sampling (SS) and Hamiltonian Monte Carlo (HMC).
We generate the elements of $\boldsymbol{W}$, $\boldsymbol{W}_v$, $\boldsymbol{W}_h$, $\boldsymbol{b}$ and $\boldsymbol{b}_v$ by sampling from standard Gaussian $\mathcal{N}(0,1)$. These parameters are fixed and kept the same for  all the methods in comparison. For inference, we generate an observation  $\boldsymbol{v} \in [0,1]^d $ by uniformly sampling  and keep it fixed and same for all the methods. For SS and HMC, we use the \textit{slicesample} function and \textit{hmcSampler} function in MATLAB, respectively. We use the approximation of i.i.d. MC with $10^7$ samples as the pseudo ground-truth. The approximation errors $|\widehat{Z}- Z |/Z$ and  $|\widehat{\mathcal{L}}- \mathcal{L} |/\mathcal{L} $ are shown in Fig.\ref{z10}-\ref{z500} and Fig.\ref{Infer10}-\ref{Infer500}, respectively. our method consistently outperforms MC, HMC and SS on all cases.  Moreover, our method is much cheaper than SS and HMC.

\begin{figure*}[t]
\centering
\subfigure[\scriptsize{10-d$|\widehat{Z}- Z |/Z$}]{
\label{z10}
\includegraphics[width=0.224\linewidth]{./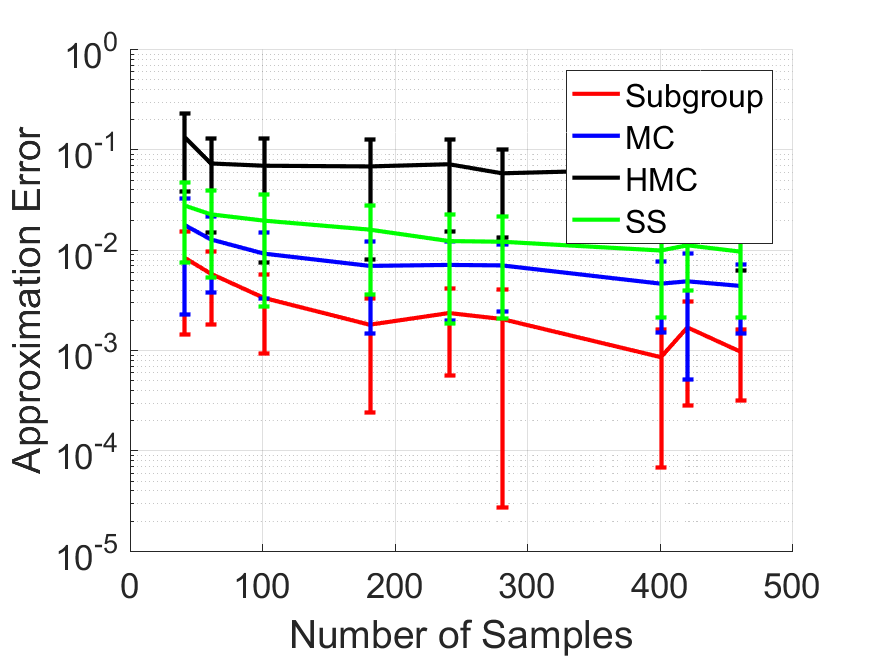}}
\subfigure[\scriptsize{50-d $|\widehat{Z}- Z |/Z$}]{
\label{z50}
\includegraphics[width=0.224\linewidth]{./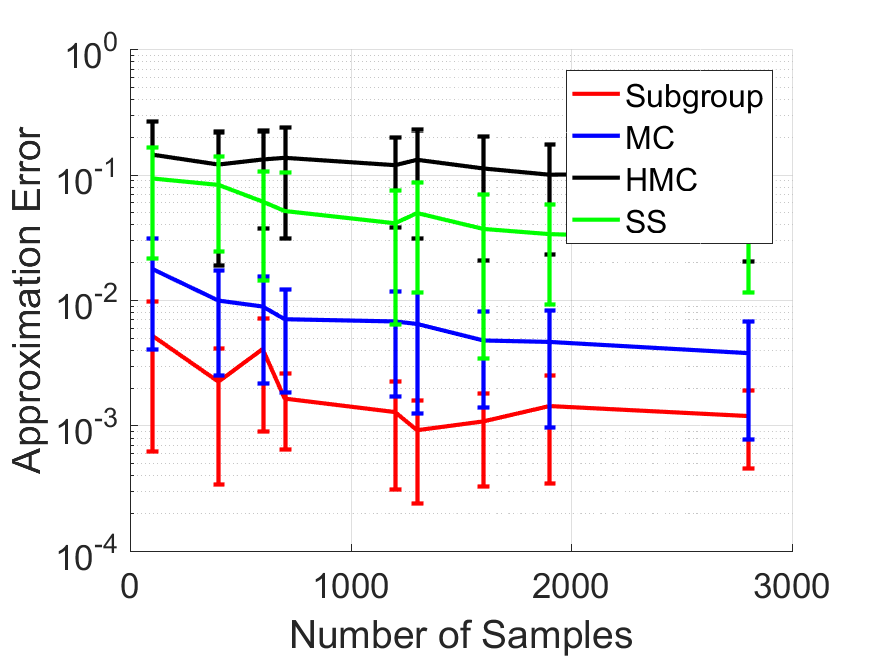}}
\subfigure[\scriptsize{100-d $|\widehat{Z}- Z |/Z$}]{
\label{z100}
\includegraphics[width=0.224\linewidth]{./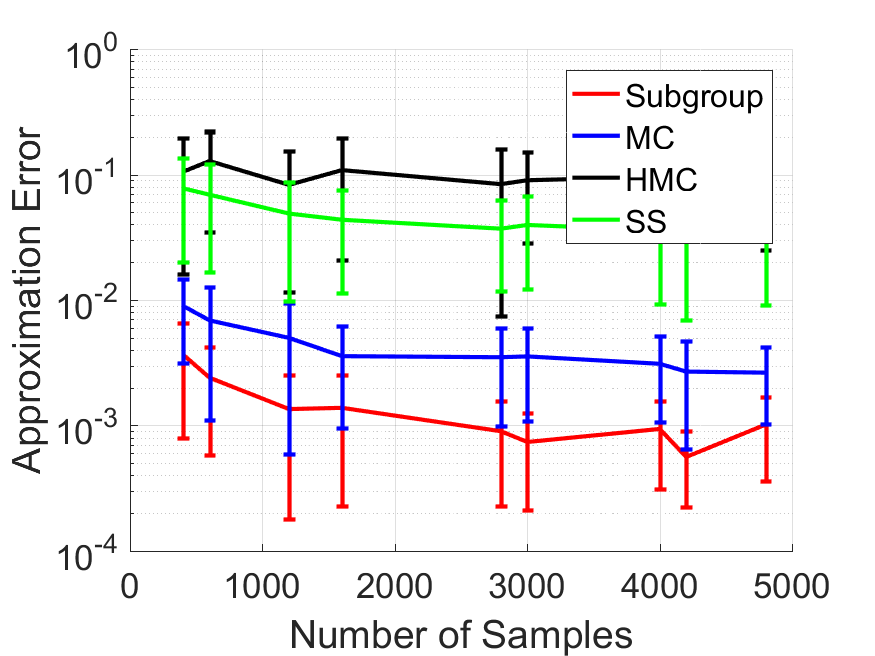}}
\subfigure[\scriptsize{500-d $|\widehat{Z}- Z |/Z$}]{
\label{z500}
\includegraphics[width=0.224\linewidth]{./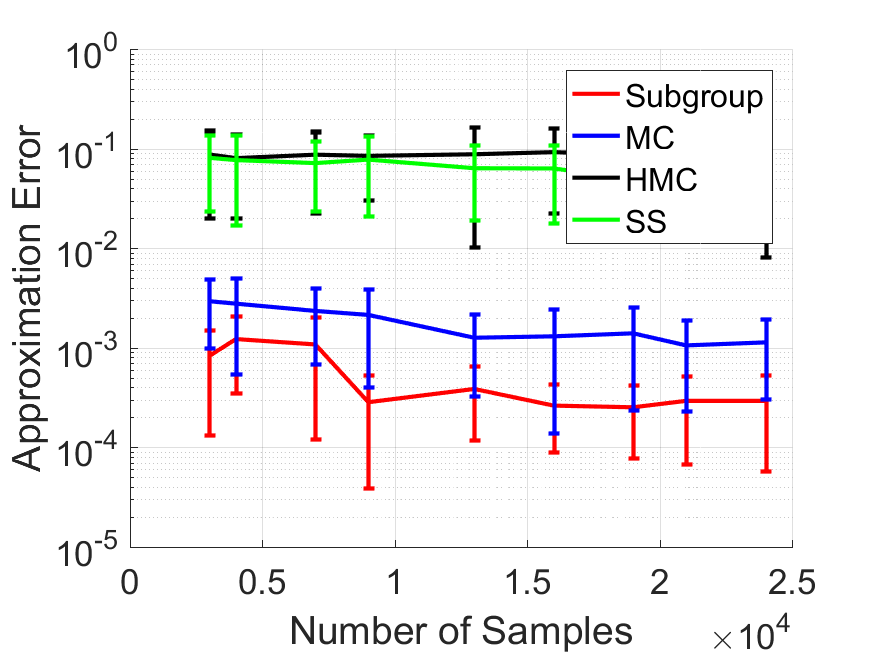}}
\subfigure[\scriptsize{10-d  $|\widehat{\mathcal{L}}- \mathcal{L} |/\mathcal{L} $}]{
\label{Infer10}
\includegraphics[width=0.224\linewidth]{./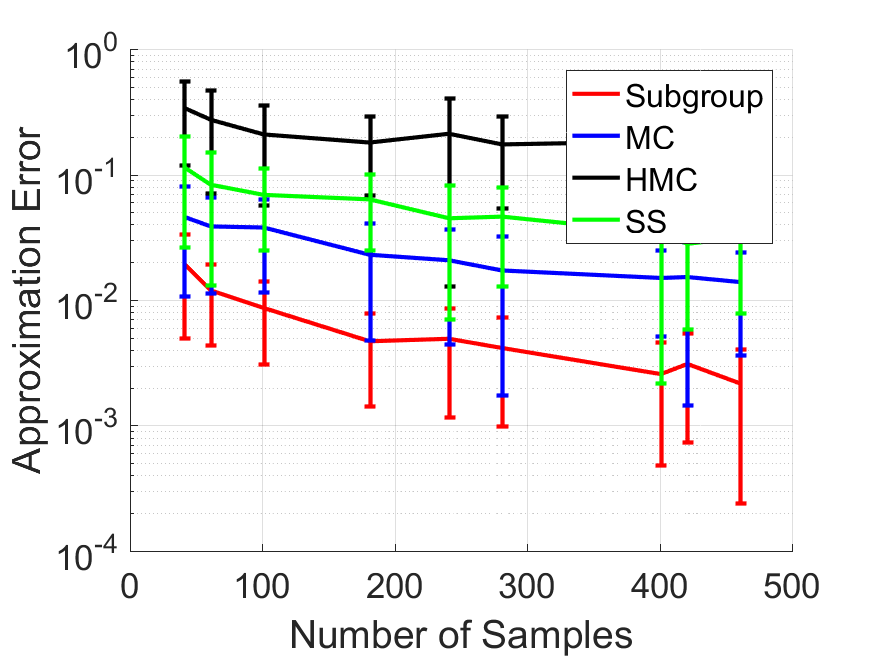}}
\subfigure[\scriptsize{50-d $|\widehat{\mathcal{L}}- \mathcal{L} |/\mathcal{L} $}]{
\label{Infer50}
\includegraphics[width=0.224\linewidth]{./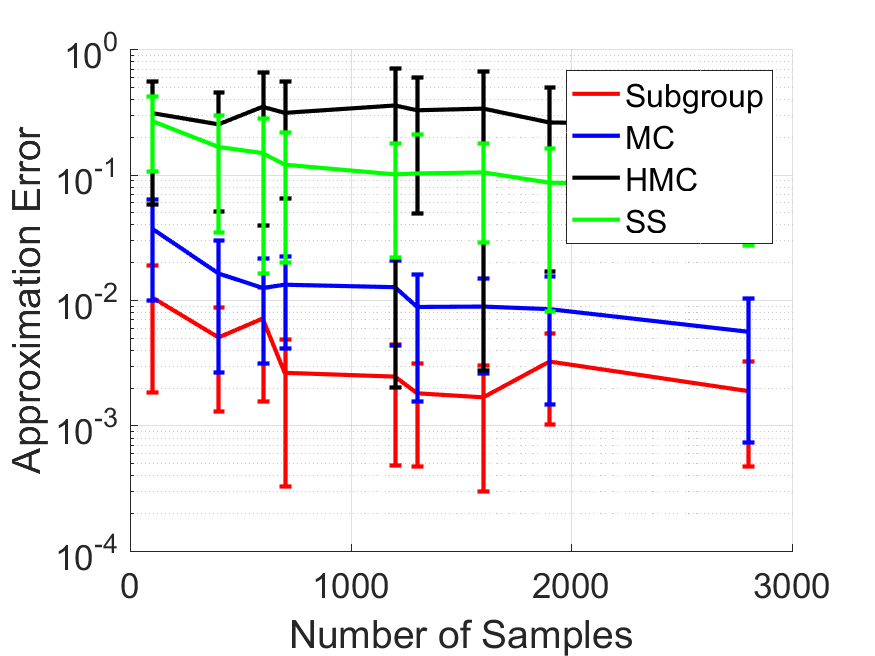}}
\subfigure[\scriptsize{100-d $|\widehat{\mathcal{L}}- \mathcal{L} |/\mathcal{L} $}]{
\label{Infer100}
\includegraphics[width=0.224\linewidth]{./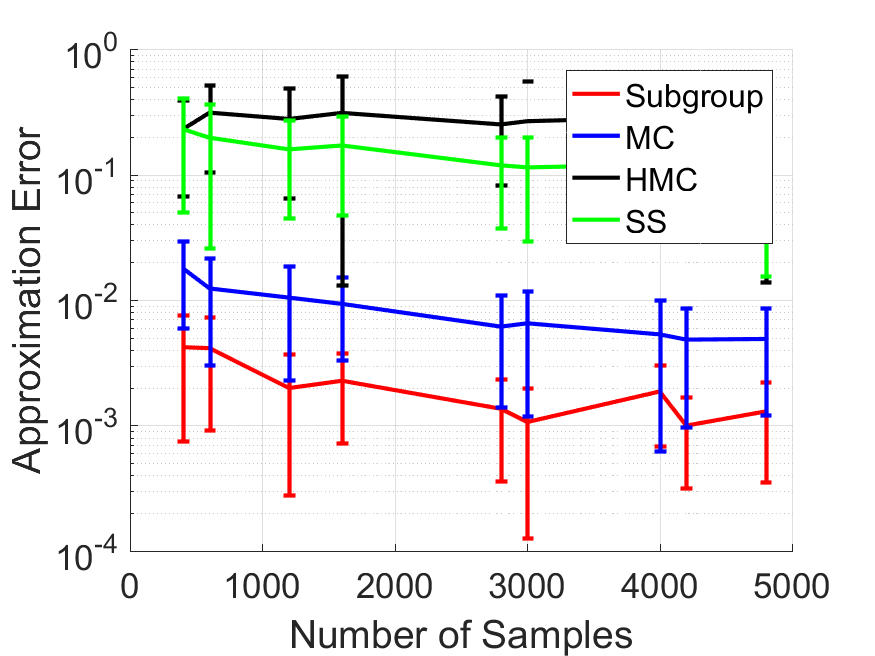}}
\subfigure[\scriptsize{500-d $|\widehat{\mathcal{L}}- \mathcal{L} |/\mathcal{L} $}]{
\label{Infer500}
\includegraphics[width=0.224\linewidth]{./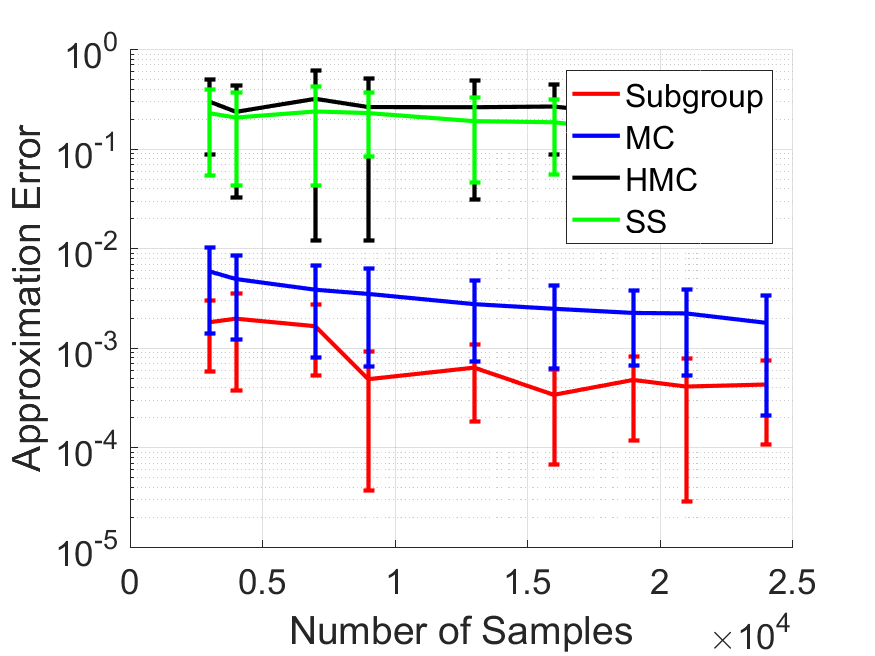}}

\caption{  Mean approximation error  over 50 independent runs. Error bars are with in $1\times$ std  }
\label{BMtest}
\vspace{-4mm}
\end{figure*}

\textbf{Comparison to sequential Monte Carlo.}
When the positive density region takes a large fraction of the entire domain, our method is very competitive. When it is only inside a small part of a large domain, our method may not be better than sequential adaptive sampling. In this case, it is interesting to take advantage of both lattice and adaptive sampling. E.g., one can employ our subgroup rank-1 lattice as a rough partition of the domain to find high mass regions, then take sequential adaptive sampling on the promising regions with the lattice points as the start points.  Also, it is interesting to consider recursively apply our subgroup rank-1 lattice to refine the partition. Moreover, our subgroup-based rank-1 lattice enables black-box evaluation without the need for gradient information.  In contrast, several sequential sampling methods, e.g., HMC,  need a gradient of density function for sampling.

\vspace{-3mm}
\section{Conclusion}
\vspace{-3mm}

We propose a closed-form method for rank-1 lattice construction, which is simple and efficient without exhaustive  computer search.  Theoretically, we prove that our subgroup rank-1 lattice has few different pairwise distance values, which is more regular to be evenly spaced. Moreover, we prove a lower and an upper bound for the minimum  toroidal distance of a non-degenerate rank-1 lattice. Empirically, our
subgroup rank-1 lattice obtains near-optimal minimum toroidal distance compared with Korobov exhaustive search.     Moreover, subgroup rank-1 lattice achieves  smaller integration approximation error. In addition, we propose a closed-form method to generate QMC points set on sphere $\mathbb{S}^{d-1}$. We proved upper bounds of the mutual coherence of the  generated points.  Further,  we show an example of  CycleGAN training in the supplement. Our  subgroup rank-1 lattice sampling and QMC on sphere can serve as an alternative for training generative models. 

\vspace{-3mm}
\section*{Acknowledgement and Funding Disclosure}
\vspace{-3mm}
We thank the reviewers for their valuable comments and suggestions.  Yueming Lyu was supported by UTS President Scholarship.
Prof. Ivor W. Tsang was supported by ARC DP180100106 and
DP200101328.

\section*{Broader Impact}

In this paper, we proposed a closed-form rank-one lattice construction based on group theory for Quasi-Monte Carlo.  Our method does not require the time-consuming exhaustive computer search.   Our method is a fundamental tool for integral approximation and sampling. 

 Our method may serve as a potential advance in QMC, which may have an impact on a wide range of applications that rely on integral approximation. It includes kernel approximation with feature map, variational inference in Bayesian learning,    generative modeling, and variational autoencoders.
 This may bring useful applications and be beneficial to society and the community.  Since our method focuses more on the theoretical side, the direct negative influences and ethical issues are negligible.

\bibliography{paper}
\bibliographystyle{plain}

\newpage

\appendix


\textbf{Organization:}
In the supplement,  we present the detailed proof of  the Theorem 1, Theorem 2 and Corollary 1 in section~\ref{proofT1},  section~\ref{proofT2} and section~\ref{proofc1}, respectively. We then present a subgroup-based QMC on sphere $\mathbb{S}^{d-1}$ in Section~\ref{sQMC}. We give the detailed proof of Theorem~\ref{Sbound1} and Theorem~\ref{Sbound2} in section~\ref{proofT3} and section~\ref{proofT4}, respectively.  We then present QMC for generative CycleGAN in section~\ref{QMCgm} and section~\ref{GenerativeExp}. At last, we present the experimental results of kernel approximation on SIFT1M dataset in Figure~\ref{fig_KernelApp2}.


\section{Proof of Theorem 1}
\label{proofT1}

\begin{theorem*}
Suppose $n$ is a prime number and $2d | (n-1)$. Let $g$ be a  primitive root of $n$. Let $  \boldsymbol{z} = [ g^0,g^{\frac{n-1}{2d}}, g^{\frac{2(n-1)}{2d}},\cdots, g^{\frac{(d-1)(n-1)}{2d} } ] \; \text{mod} \; n $. Construct a rank-1 lattice $X \!=\!\{\boldsymbol{x}_0,\cdots,\boldsymbol{x}_{n-1}\}$ with $\boldsymbol{x}_i = \frac{i \boldsymbol{z}  \; \text{mod} \;n}{n}, i \in \{0,...,n-1\}  $. Then, there are $\frac{n-1}{2d}$ distinct pairwise  toroidal distance values among $X$, and for each distance value, there are the same number of pairs that obtain this value.
\end{theorem*}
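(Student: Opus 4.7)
The plan is to exploit the rank-1 lattice's translation invariance to reduce pairwise distances to distances from the origin, and then show that the resulting distance depends only on the coset modulo the subgroup $H = \langle g^{(n-1)/(2d)} \rangle \le (\mathbb{Z}/n\mathbb{Z})^*$. First, since $\boldsymbol{x}_i - \boldsymbol{x}_j = \boldsymbol{x}_{(i-j)\bmod n}$ on the torus, we have $\|\boldsymbol{x}_i - \boldsymbol{x}_j\|_{T_p} = \|\boldsymbol{x}_k\|_{T_p}$ for $k = (i-j) \bmod n \in \{1,\ldots,n-1\}$. Writing $k \equiv g^a \bmod n$, which is well-defined since $g$ is a primitive root, it then suffices to track how $\|\boldsymbol{x}_{g^a}\|_{T_p}$ depends on $a \bmod 2d$, i.e., on the coset of $k$ in $(\mathbb{Z}/n\mathbb{Z})^*/H$.

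The key identity I would establish is that the multiset $\{k z_i \bmod n\}_{i=1}^d \cup \{-k z_i \bmod n\}_{i=1}^d$ equals the coset $g^a H$ as a set of size $2d$. Writing $\omega = g^{(n-1)/(2d)}$, the first subset is $g^a\{1,\omega,\ldots,\omega^{d-1}\}$ and the second is $g^a\{\omega^d,\ldots,\omega^{2d-1}\}$, using $\omega^d = g^{(n-1)/2} \equiv -1 \bmod n$; together they exhaust $g^a H$ and partition it into $d$ natural pairs $\{y, n-y\}$. Consequently,
\begin{align*}
n^p\,\|\boldsymbol{x}_k\|_{T_p}^p \;=\; \sum_{i=1}^d \min\!\bigl(k z_i \bmod n,\; n-k z_i \bmod n\bigr)^p \;=\; \tfrac{1}{2}\sum_{y \in g^a H} \min(y,\, n-y)^p,
\end{align*}
which depends only on the coset $g^a H$. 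Hence the toroidal distance realises at most $(n-1)/(2d)$ distinct values, one per coset of $H$ in $(\mathbb{Z}/n\mathbb{Z})^*$, the count being furnished by Lagrange's theorem.

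For the pair count, I would use that $-1 = \omega^d \in H$, so every coset $C$ of $H$ is closed under $k \mapsto n-k$. The number of unordered pairs $\{i,j\} \subset \{0,\ldots,n-1\}$ with $(j-i) \bmod n \in C$ equals $\sum_{k \in C}(n-k)$, and pairing each $k \in C$ with $n-k \in C$ collapses this sum to $dn$ (since $n$ is odd prime, no $k$ satisfies $2k = n$, so the pairing is well-defined). Thus each of the $(n-1)/(2d)$ coset-based distance values is attained by the same number $dn$ of unordered pairs (equivalently $2dn$ ordered pairs), and these totals correctly sum to $\binom{n}{2}$.

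The main obstacle is the coset identity together with the resulting invariance of $\sum_i \min(k z_i,\, n-k z_i)^p$ under replacing $k$ by another coset representative; everything afterward --- counting cosets via Lagrange and counting pairs per coset --- is routine bookkeeping enabled by the closure of $H$ under negation.
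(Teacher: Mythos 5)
Your proposal is correct and follows essentially the same route as the paper's proof: reduce pairwise distances to $\|\boldsymbol{x}_k\|_{T_p}$ via translation invariance, observe that $\{\boldsymbol{z},-\boldsymbol{z}\}\bmod n$ is the subgroup $H=\langle g^{(n-1)/(2d)}\rangle$ containing $-1=g^{(n-1)/2}$, and invoke the coset decomposition (Lagrange) to get $\frac{n-1}{2d}$ values with equal pair counts. Your explicit identity $n^p\|\boldsymbol{x}_k\|_{T_p}^p=\tfrac12\sum_{y\in g^aH}\min(y,n-y)^p$ and the unordered-pair count $\sum_{k\in C}(n-k)=dn$ are just slightly more explicit renderings of the same argument (and, like the paper, you only establish that distinct cosets \emph{parametrize} the values, i.e., at most $\frac{n-1}{2d}$ distinct values).
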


\begin{proof}

From the definition of the rank-1 lattice, we know that 
\begin{align}
  \|\boldsymbol{x}_i - \boldsymbol{x}_j   \|_{T_p} =   \left \| \frac{i \boldsymbol{z}  \; \text{mod} \;n}{n} - \frac{j \boldsymbol{z}  \; \text{mod} \;n}{n} \right\|_{T_p} =   \left \| \frac{(i-j) \boldsymbol{z}  \; \text{mod} \;n}{n}   \right\|_{T_p} = \left \| \frac{k \boldsymbol{z}  \; \text{mod} \;n}{n}   \right\|_{T_p} =  \|\boldsymbol{x}_k  \|_{T_p}, 
\end{align}
where $ \|{\bf{x}} \|_{T_p} $ denotes the $l_p$-norm-based toroidal distance between $ \bf{x} $ and $ \bf{0} $, and $k \equiv i-j \; \text{mod} \;n $.

For non-identical pair $\boldsymbol{x}_i, \boldsymbol{x}_j \in  X \!=\!\{\boldsymbol{x}_0,\cdots,\boldsymbol{x}_{n-1}\}$,  we know $i \ne j$. Thus, $  i-j \equiv k \in \{1,\cdots, n-1\}$. Moreover, for each $k$, there are $n$ pairs of $i,j \in \{0,\cdots,n-1\}$ obtaining $i-j \equiv k \; \text{mod} \; n$. Therefore, the non-identical pairwise toroidal distance is determined by $\|\boldsymbol{x}_k  \|_{T_p} $ for $ k \in \{1,\cdots,n-1\}$. Moreover,  each $\|\boldsymbol{x}_k  \|_{T_p} $ corresponds to $n$ pairwise distances.  

From the definition of the $l_p$-norm-based toroidal distance, we know that
\begin{align}
     \|\boldsymbol{x}_k  \|_{T_p} & = \left\| \text{min} \left( \frac{k \boldsymbol{z}  \; \text{mod} \;n}{n},  \frac{ n- k \boldsymbol{z}  \; \text{mod} \;n}{n} \right)  \right\|_{p} \nonumber
   \\  & =  \left\| \text{min} \left( \frac{k \boldsymbol{z}  \; \text{mod} \;n}{n},  \frac{  (-k\boldsymbol{z})   \; \text{mod} \;n}{n} \right)  \right\|_{p}, 
 \end{align}
 where $\text{min}(\cdot,\cdot)$  denotes the element-wise min operation between two inputs.

Since  $n$ is a prime number, from the number theory, we know that for a primitive root $g$, the residue of $\{g^0,g^1,\cdots,g^{n-2} \} $  modulo $n$ forms a cyclic group under multiplication, and $g^{n-1} \equiv 1 \; \text{mod} \;n$.   Moreover, 
there is  a one-to-one correspondence between the residue of $\{g^0,g^1,\cdots,g^{n-2} \} $  modulo $n$ and the set $\{1,2,\cdots,n-1\}$. Then,  we know that $\exists k',  g^{k'} \equiv k \; \text{mod} \; n$. It follows that
 \begin{align}
 \label{xk}
      \|\boldsymbol{x}_k  \|_{T_p}  =  \left\| \text{min} \left( \frac{g^{k'} \boldsymbol{z}  \; \text{mod} \;n}{n},  \frac{  (-g^{k'}\boldsymbol{z})   \; \text{mod} \;n}{n} \right)  \right\|_{p}.
 \end{align}
 
 Since $ (g^{\frac{n-1}{2}})^2 =g^{n-1} \equiv 1 \; \text{mod} \;n$ and $g$ is a primitive root, we know  that $g^{\frac{n-1}{2}} \equiv -1 \; \text{mod} \;n$. Denote $ \{\boldsymbol{z},-\boldsymbol{z}\} :=\{{z}_1,z_2,\cdots,z_d,-z_1,z_2,\cdots,-z_d\} $.
  Since $  \boldsymbol{z} = [ g^0,g^{\frac{n-1}{2d}}, g^{\frac{2(n-1)}{2d}},\cdots, g^{\frac{(d-1)(n-1)}{2d} } ] \; \text{mod} \; n $, we know that
 \begin{align}
    \{\boldsymbol{z},-\boldsymbol{z}\} & \equiv  \{\boldsymbol{z},g^{\frac{n-1}{2}} \boldsymbol{z}\} \; \text{mod} \;n \\
     & \equiv  \{g^0,g^{\frac{n-1}{2d}}, g^{\frac{2(n-1)}{2d}},\cdots, g^{\frac{(d-1)(n-1)}{2d} }, g^{\frac{n-1}{2}+0},g^{\frac{n-1}{2}+\frac{n-1}{2d}},\cdots,g^{\frac{n-1}{2}+\frac{(d-1)(n-1)}{2d}}  \}  \; \text{mod} \;n  \\
     & \equiv \{g^0,g^{\frac{n-1}{2d}}, g^{\frac{2(n-1)}{2d}},\cdots, g^{\frac{(d-1)(n-1)}{2d} }, g^{\frac{d(n-1)}{2d}},g^{\frac{(d+1)(n-1)}{2d}},\cdots,g^{\frac{(2d-1)(n-1)}{2d}}  \}  \; \text{mod} \;n.
 \end{align}
 It follows that $H :=\{{z}_1,z_2,\cdots,z_d,-z_1,z_2,\cdots,-z_d\} \; \text{mod} \;n$ forms a subgroup of the group $\{g^0,g^1,\cdots,g^{n-2} \} \; \text{mod} \;n$.  From Lagrange's theorem in group theory~\cite{dummit2004abstract},  we know that the cosets of the subgroup $H$ partition the entire group $\{g^0,g^1,\cdots,g^{n-2}\}$ into equal-size, non-overlapping sets, i.e., cosets $g^0H, g^1H, \cdots, g^{\frac{n-1-2d}{2d}}H$,  and the number of cosets of $H$ is $\frac{n-1}{2d}$.
 
 Together with Eq.(\ref{xk}), we know that distance $ \|\boldsymbol{x}_k  \|_{T_p}$  for $k' \in \{0,\cdots,n-2\} $ has  $\frac{n-1}{2d}$ different values simultaneously hold for all $p \in (0,\infty)$, i.e, $ \left\| \text{min} \left( \frac{g^{h} \boldsymbol{z}  \; \text{mod} \;n}{n},  \frac{  (-g^{h}\boldsymbol{z})   \; \text{mod} \;n}{n} \right)  \right\|_{p}$ for  $h \in \{0,\cdots,\frac{n-1}{2d}-1\}$ . And for each distance value, there are the same number of terms $ \|\boldsymbol{x}_k  \|_{T_p}$ that obtain this value. 
 Since each $ \|\boldsymbol{x}_k  \|_{T_p}$ corresponds to $n$ pairwise distance $ \|\boldsymbol{x}_i - \boldsymbol{x}_j   \|_{T_p}$, where $k \equiv i-j \; \text{mod} \; n$, there are $\frac{n-1}{2d}$ distinct  pairwise toroidal distance. Moreover, for each distance value, there are the same number of pairs that obtain this value.

\end{proof}

\section{Proof of Theorem 2}
\label{proofT2}

\begin{theorem*}
Suppose $n$ is a prime number and $n \ge 2d+1$.  Let $  \boldsymbol{z} = [ z_1,z_2, \cdots, z_d  ] $ with $1 \le z_k \le n-1 $.  Construct non-degenerate rank-1 lattice $X=\{\boldsymbol{x}_0,\cdots,\boldsymbol{x}_{n-1}\}$ with $\boldsymbol{x}_i = \frac{i \boldsymbol{z}  \; \text{mod} \;n}{n}, i \in \{0,...,n-1\}  $. Then, the minimum pairwise toroidal distance can be bounded as 
\begin{align}
&  \frac{d(d+1)}{2n}  \le \min_{  i,j \in \{0,\cdots,n-1\} ,  i \ne j} \|{\bf{x}}_i - {\bf{x}}_j \|_{T_1} \le \frac{(n+1)d}{4n}  \\
 &   \frac{\sqrt{6d(d+1)(2d+1)}}{6n}  \le \min_{  i,j \in \{0,\cdots,n-1\} ,  i \ne j} \|{\bf{x}}_i - {\bf{x}}_j \|_{T_2}  \le  \sqrt{\frac{(n+1)d}{12n} }, 
\end{align}
where $\|\cdot\|_{T_1}$ and  $\|\cdot\|_{T_2}$ denotes the $l_1$-norm-based toroidal distance and the $l_2$-norm-based toroidal distance, respectively.
\end{theorem*}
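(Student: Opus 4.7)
The plan is to reduce the pairwise minimum to a minimum over a single index and then attack the upper and lower bounds by two different techniques. For a rank-1 lattice, $\|\boldsymbol{x}_i - \boldsymbol{x}_j\|_{T_p} = \|\boldsymbol{x}_k\|_{T_p}$ where $k \equiv i - j \pmod n$, so the pairwise minimum equals $\min_{k \in \{1,\ldots,n-1\}} \|\boldsymbol{x}_k\|_{T_p}$. Introduce $a_\ell(k) := \min(kz_\ell \bmod n,\; n - kz_\ell \bmod n)$, which lies in $\{1,\ldots,(n-1)/2\}$ since $n$ is an odd prime and $kz_\ell \not\equiv 0 \pmod n$. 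Then $\|\boldsymbol{x}_k\|_{T_p}^p = n^{-p}\sum_{\ell=1}^d a_\ell(k)^p$, so the task becomes bounding $\sum_\ell a_\ell(k)^p$ uniformly in $k$ from below and in the best $k$ from above.

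For the upper bound I would use an averaging argument. Because $n$ is prime and $1\le z_\ell \le n-1$, the map $k \mapsto k z_\ell \bmod n$ is a bijection of $\{1,\ldots,n-1\}$, so $\sum_{k=1}^{n-1} a_\ell(k)^p = 2\sum_{m=1}^{(n-1)/2} m^p$. Standard sum-of-powers formulas give $(n^2-1)/4$ for $p=1$ and $n(n^2-1)/12$ for $p=2$. Summing over $\ell$, dividing by $n-1$, and invoking $\min \le \text{mean}$ (applied to $\|\boldsymbol{x}_k\|_{T_2}^2$ in the $p=2$ case, with a square root at the end) delivers exactly $d(n+1)/(4n)$ and $\sqrt{d(n+1)/(12n)}$.

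For the lower bound I would argue that for every fixed $k$, the $d$ quantities $a_1(k),\ldots,a_d(k)$ are $d$ distinct positive integers in $\{1,\ldots,(n-1)/2\}$, a set of size at least $d$ by the hypothesis $n\ge 2d+1$. If $a_{\ell_1}(k) = a_{\ell_2}(k)$ with $\ell_1\ne\ell_2$, then $k z_{\ell_1} \equiv \pm k z_{\ell_2} \pmod n$, and invertibility of $k$ gives $z_{\ell_1} \equiv \pm z_{\ell_2} \pmod n$, contradicting non-degeneracy. Among $d$ distinct positive integers, $\sum_\ell a_\ell(k)$ is minimized by $\{1,2,\ldots,d\}$, giving $\sum a_\ell(k) \ge d(d+1)/2$; likewise $\sum a_\ell(k)^2 \ge d(d+1)(2d+1)/6$. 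Dividing by $n$ and $n^2$ respectively (and taking a square root for $T_2$) yields the stated lower bounds.

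The main obstacle is the distinctness step. The literal reading $z_i \ne z_j$ only excludes equal residues; ruling out $z_{\ell_1} \equiv -z_{\ell_2} \pmod n$ requires interpreting \emph{non-degenerate} as $z_{\ell_1}\not\equiv \pm z_{\ell_2}\pmod n$, which I would adopt here (otherwise $\boldsymbol{z}=(1,n-1)$ at $k=1$ already violates the $T_1$ bound). Everything else is either a direct closed-form evaluation of $\sum_m m^p$ for $p\in\{1,2\}$ or the elementary inequality $\min\le\text{mean}$, both routine once the reduction to the one-index minimum and the definition of $a_\ell(k)$ are in place.
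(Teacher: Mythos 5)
Your proof follows essentially the same route as the paper's: reduce the pairwise minimum to $\min_{k\in\{1,\dots,n-1\}}\|\boldsymbol{x}_k\|_{T_p}$, use the fact that $k\mapsto kz_\ell \bmod n$ permutes $\{1,\dots,n-1\}$ together with $\min\le\mathrm{mean}$ for the upper bounds (applying the averaging to $\|\boldsymbol{x}_k\|_{T_2}^2$ is a harmless variant of the paper's Cauchy--Schwarz step and gives the same constant), and bound the minimum from below by arguing the folded coordinates are distinct positive integers, hence their sum is at least $1+2+\cdots+d$. The one point where you deviate is to your credit: the paper simply asserts that $z_i\neq z_j$ forces the components of $\boldsymbol{x}_k$ to be all different and that the minimizing point is $[1/n,\dots,d/n]$, which silently ignores that $\min(kz_{\ell}\bmod n,\;n-kz_{\ell}\bmod n)$ can coincide for two indices whenever $z_{\ell_1}\equiv -z_{\ell_2}\pmod n$; your example $\boldsymbol{z}=(1,n-1)$ at $k=1$ indeed violates the stated $T_1$ lower bound under the literal reading, so strengthening non-degeneracy to $z_{\ell_1}\not\equiv\pm z_{\ell_2}\pmod n$ is exactly the repair needed, and with it your distinctness argument is the intended one.
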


\begin{proof}

From the definition of the rank-1 lattice, we know that 
\begin{align}
  \|\boldsymbol{x}_i - \boldsymbol{x}_j   \|_{T_p} =   \left \| \frac{i \boldsymbol{z}  \; \text{mod} \;n}{n} - \frac{j \boldsymbol{z}  \; \text{mod} \;n}{n} \right\|_{T_p} =   \left \| \frac{(i-j) \boldsymbol{z}  \; \text{mod} \;n}{n}   \right\|_{T_p} = \left \| \frac{k \boldsymbol{z}  \; \text{mod} \;n}{n}   \right\|_{T_p} =  \|\boldsymbol{x}_k  \|_{T_p}, 
\end{align}
where $ \|{\bf{x}} \|_{T_p} $ denotes the $l_p$-norm-based toroidal distance, we know that  between $ \bf{x} $ and $ \bf{0} $, and $k \equiv i-j \; \text{mod} \;n $.

Thus, the minimum pairwise toroidal distance is equivalent to Eq. (\ref{minK})
\begin{align}
\label{minK}
    \min_{  i,j \in \{0,\cdots,n-1\} ,  i \ne j} \|{\bf{x}}_i - {\bf{x}}_j \|_{T_p} =   \min_{  k \in \{1,\cdots,n-1\} } \|{\bf{x}}_k \|_{T_p}.
\end{align}
Since the minimum value is smaller than the average value, it follows that
\begin{align}
\label{MeanIneq}
    \min_{  i,j \in \{0,\cdots,n-1\} ,  i \ne j} \|{\bf{x}}_i - {\bf{x}}_j \|_{T_p} =   \min_{  k \in \{1,\cdots,n-1\} } \|{\bf{x}}_k \|_{T_p} \le \frac{\sum_{k=1}^{n-1} {\| \boldsymbol{x}_k \|_{T_p} } }{n-1}.
\end{align}

Since  $n$ is a prime number, from  number theory, we know that for a primitive root $g$, the residue of $\{g^0,g^1,\cdots,g^{n-2} \} $  modulo $n$ forms a cyclic group under multiplication, and $g^{n-1} \equiv 1 \; \text{mod} \;n$.   Moreover, 
there is  a one-to-one correspondence between the residue of $\{g^0,g^1,\cdots,g^{n-2} \} $  modulo $n$ and the set $\{1,2,\cdots,n-1\}$. Then, for each $t^{th}$ component of $  \boldsymbol{z} = [ z_1,z_2, \cdots, z_d  ] $,  we know that $\exists m_t$ such that $ g^{m_t} \equiv z_t \; \text{mod} \; n$.  Therefore, the set $\left\{kz_t\; \text{mod} \; n \big| \forall k \in  \{1,\cdots,n-1\}  \right\}$  is a permutation of the set $\{1,\cdots,n-1\}$.

From the definition of the $l_p$-norm-based toroidal distance, we know that each $t^{th}$  component of $\|{\bf{x}}_k \|_{T_p}$ is determined by $\text{min}(kz_t\; \text{mod} \; n, n- kz_t\; \text{mod} \; n)$. Because the set $\left\{kz_t\; \text{mod} \; n \big| \forall k \in  \{1,\cdots,n-1\}  \right\}$   is a permutation of set $\{1,\cdots,n-1\}$, we know that the set $\left\{\text{min}(kz_t\; \text{mod} \; n, n- kz_t\; \text{mod} \; n) \big| \forall k \in  \{1,\cdots,n-1\}  \right\}$ 
 consists of two copy of permutation of the set $\{1,\cdots,\frac{n-1}{2}\}$. It follows that
 \begin{align}
     \sum_{k=1}^{n-1} {\| \boldsymbol{x}_k \|_{T_1} } = \frac{ \sum_{t=1}^d { \sum_{k=1}^{n-1} {\text{min}(kz_t\; \text{mod} \; n, n- kz_t\; \text{mod} \; n  )  }  }}{n}  = \frac{2d \sum_{k=1}^{\frac{n-1}{2}} {k}  }{n} = \frac{d(n+1)(n-1)}{4n}.
 \end{align}
 Similarly, for $l_2$-norm-based toroidal distance, we have that 
 \begin{align}
     \sum_{k=1}^{n-1} {\| \boldsymbol{x}_k \|^2_{T_2} } =  \frac{ \sum_{t=1}^d { \sum_{k=1}^{n-1} {\text{min}(kz_t\; \text{mod} \; n, n- kz_t\; \text{mod} \; n  )^2  }  }}{n^2}  = \frac{2d \sum_{k=1}^{\frac{n-1}{2}} {k^2}  }{n^2} = \frac{d(n-1)(n+1)}{12n}.
 \end{align}
 By Cauchy–Schwarz inequality, we know that 
 \begin{align}
       \sum_{k=1}^{n-1} {\| \boldsymbol{x}_k \|_{T_2} } \le \sqrt{ (n-1)  \sum_{k=1}^{n-1} {\| \boldsymbol{x}_k \|^2_{T_2} }  } = (n-1) \sqrt{\frac{d(n+1)}{12n}}.
 \end{align}
 
 Together with Eq.(\ref{MeanIneq}), it follows that
 \begin{align}
     & \min_{  i,j \in \{0,\cdots,n-1\} ,  i \ne j} \|{\bf{x}}_i - {\bf{x}}_j \|_{T_1} =  \min_{  k \in \{1,\cdots,n-1\} } \|{\bf{x}}_k \|_{T_1} \le \frac{(n+1)d}{4n}  \\
 &    \min_{  i,j \in \{0,\cdots,n-1\} ,  i \ne j} \|{\bf{x}}_i - {\bf{x}}_j \|_{T_2} =  \min_{  k \in \{1,\cdots,n-1\} } \|{\bf{x}}_k \|_{T_2} \le  \sqrt{\frac{(n+1)d}{12n} }. 
 \end{align}
 
 Now, we are going to prove the lower bound. For a non-degenerate rank-1 lattice, the components of generating vector $\boldsymbol{z}=[z_1,\cdots, z_d]$ should be all different.  Then, we know the components of $\boldsymbol{x}_k, \forall k \in \{1,\cdots, n-1\} $ should be all different. Thus,  the min norm point is achieved at $\boldsymbol{x}^* = [1/n,2/n,\cdots,d/n]$. Since $n\ge 2d+1$, it follows that 
 \begin{align}
      & \min_{  i,j \in \{0,\cdots,n-1\} ,  i \ne j} \|{\bf{x}}_i - {\bf{x}}_j \|_{T_1} =  \min_{  k \in \{1,\cdots,n-1\} } \|{\bf{x}}_k \|_{T_1} \ge \|{\bf{x}}^* \|_{T_1} = \frac{(d+1)d}{2n}  \\
 &    \min_{  i,j \in \{0,\cdots,n-1\} ,  i \ne j} \|{\bf{x}}_i - {\bf{x}}_j \|_{T_2} =  \min_{  k \in \{1,\cdots,n-1\} } \|{\bf{x}}_k \|_{T_2} \ge  \|{\bf{x}}^* \|_{T_2} =  \frac{\sqrt{6d(d+1)(2d+1)}}{6n} . 
 \end{align}

\end{proof}

\section{Proof of Corollary 1}
\label{proofc1}

\begin{corollary*}
\label{optimalCor}
Suppose $n=2d+1$ is a prime number. Let $g$ be a  primitive root of $n$. Let $  \boldsymbol{z} = [ g^0,g^{\frac{n-1}{2d}}, g^{\frac{2(n-1)}{2d}},\cdots, g^{\frac{(d-1)(n-1)}{2d} } ] \; \text{mod} \; n $. Construct rank-1 lattice $X=\{\boldsymbol{x}_0,\cdots,\boldsymbol{x}_{n-1}\}$ with $\boldsymbol{x}_i = \frac{i \boldsymbol{z}  \; \text{mod} \;n}{n}, i \in \{0,...,n-1\}  $. Then,  the pairwise toroidal distance of the lattice $X$  attains the upper bound. 
\begin{align}
    &  \|{\bf{x}}_i - {\bf{x}}_j \|_{T_1} = \frac{(n+1)d}{4n} ,  \forall i,j \in \{0,\cdots,n-1\} ,  i \ne j,  \\
 &  \|{\bf{x}}_i - {\bf{x}}_j \|_{T_2} = \sqrt{\frac{(n+1)d}{12n} } ,  \forall i,j \in \{0,\cdots,n-1\} ,  i \ne j . 
\end{align}

\end{corollary*}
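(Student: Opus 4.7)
\textbf{Proof proposal for Corollary~\ref{optimalCor}.} The plan is to combine Theorem~\ref{construction} with the averaging argument used inside the proof of Theorem~\ref{Bounds}. The key observation is that when $n = 2d+1$, the hypothesis $2d \mid (n-1)$ of Theorem~\ref{construction} is satisfied in the tightest possible way: the number of distinct pairwise toroidal distances is $\tfrac{n-1}{2d} = 1$. Thus the lattice $X$ is \emph{equidistant} in the toroidal metric, meaning every non-identical pair has the same distance, which I will denote by $c_p$ for the $l_p$-norm-based toroidal distance.

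Next, I would identify $c_p$ with the mean value used to prove the upper bound in Theorem~\ref{Bounds}. Recall from Eq.~(\ref{minK}) that $\|\boldsymbol{x}_i - \boldsymbol{x}_j\|_{T_p} = \|\boldsymbol{x}_k\|_{T_p}$ where $k \equiv i-j \bmod n$. So equidistance implies $\|\boldsymbol{x}_k\|_{T_p}$ is constant in $c_p$ for all $k \in \{1,\dots,n-1\}$. For $p=1$, the proof of Theorem~\ref{Bounds} established the identity
\begin{align}
\sum_{k=1}^{n-1} \|\boldsymbol{x}_k\|_{T_1} = \frac{d(n+1)(n-1)}{4n},
\end{align}
so dividing by $n-1$ gives $c_1 = \tfrac{(n+1)d}{4n}$, matching the claimed upper bound exactly. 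For $p=2$, the corresponding identity from Theorem~\ref{Bounds} is
\begin{align}
\sum_{k=1}^{n-1} \|\boldsymbol{x}_k\|^2_{T_2} = \frac{d(n-1)(n+1)}{12n},
\end{align}
and since all summands equal $c_2^2$, we get $c_2^2 = \tfrac{(n+1)d}{12n}$, hence $c_2 = \sqrt{\tfrac{(n+1)d}{12n}}$.

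The main (and only) obstacle is a small bookkeeping point: I must justify that Theorem~\ref{construction} actually applies to this corollary's generating vector. Specifically, I must verify that $n=2d+1$ being prime guarantees (a) the existence of a primitive root $g$ (true by elementary number theory), and (b) the divisibility $2d \mid (n-1)$ (immediate since $n-1 = 2d$). Once that is noted, everything else is a direct substitution, so I would keep the writeup short: state the two ingredients, extract equidistance from Theorem~\ref{construction}, then read off $c_1$ and $c_2$ from the mean identities already derived in the proof of Theorem~\ref{Bounds}.
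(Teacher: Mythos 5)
Your proposal is correct and follows essentially the same route as the paper's own proof: invoke Theorem~1 with $\frac{n-1}{2d}=1$ to get equidistance, then read the common value off the mean identities $\sum_{k=1}^{n-1}\|\boldsymbol{x}_k\|_{T_1}=\frac{d(n+1)(n-1)}{4n}$ and $\sum_{k=1}^{n-1}\|\boldsymbol{x}_k\|^2_{T_2}=\frac{d(n-1)(n+1)}{12n}$ from the proof of Theorem~2. Your $T_2$ step (setting all summands equal to $c_2^2$ directly) is a minor, equivalent shortcut of the paper's computation and not a genuinely different argument.
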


\begin{proof}

From the definition of the rank-1 lattice, we know that 
\begin{align}
  \|\boldsymbol{x}_i - \boldsymbol{x}_j   \|_{T_p} =   \left \| \frac{i \boldsymbol{z}  \; \text{mod} \;n}{n} - \frac{j \boldsymbol{z}  \; \text{mod} \;n}{n} \right\|_{T_p} =   \left \| \frac{(i-j) \boldsymbol{z}  \; \text{mod} \;n}{n}   \right\|_{T_p} = \left \| \frac{k \boldsymbol{z}  \; \text{mod} \;n}{n}   \right\|_{T_p} =  \|\boldsymbol{x}_k  \|_{T_p}, 
\end{align}
where $ \|{\bf{x}} \|_{T_p} $ denote the $l_p$-norm-based toroidal distance, we know that  between $ \bf{x} $ and $ \bf{0} $, and $k \equiv i-j \; \text{mod} \;n $.

From Theorem 1, we know that $ \|\boldsymbol{x}_i - \boldsymbol{x}_j   \|_{T_p} \; \forall i,j \in \{0,\cdots,n-1\} ,  i \ne j$ has $\frac{n-1}{2d}$ different values.  Since $n=2d+1$, we know the pairwise toroidal distance has the same value. Therefore, we know that
\begin{align}
\label{CpairTp}
     \|\boldsymbol{x}_i - \boldsymbol{x}_j   \|_{T_p} =  \|\boldsymbol{x}_k  \|_{T_p}  = \frac{ \sum_{k=1}^{n-1} {\| \boldsymbol{x}_k \|_{T_p} }  }{n-1}   \;,  \forall i,j \in \{0,\cdots,n-1\} ,  i \ne j.
\end{align}

From  the proof of Theorem 2, we know that
\begin{align}
\label{CT1}
      \sum_{k=1}^{n-1} {\| \boldsymbol{x}_k \|_{T_1} } = \frac{ \sum_{t=1}^d { \sum_{k=1}^{n-1} {\text{min}(kz_t\; \text{mod} \; n, n- kz_t\; \text{mod} \; n  )  }  }}{n}  = \frac{2d \sum_{k=1}^{\frac{n-1}{2}} {k}  }{n} = \frac{d(n+1)(n-1)}{4n}.
\end{align}
and 
\begin{align}
\label{CT2}
     \sum_{k=1}^{n-1} {\| \boldsymbol{x}_k \|^2_{T_2} } =  \frac{ \sum_{t=1}^d { \sum_{k=1}^{n-1} {\text{min}(kz_t\; \text{mod} \; n, n- kz_t\; \text{mod} \; n  )^2  }  }}{n^2}  = \frac{2d \sum_{k=1}^{\frac{n-1}{2}} {k^2}  }{n^2} = \frac{d(n-1)(n+1)}{12n}.
 \end{align}
 
 Together  Eq.(\ref{CT1})  with Eq.(\ref{CpairTp}), we know that
 \begin{align}
      \|{\bf{x}}_i - {\bf{x}}_j \|_{T_1} = \frac{(n+1)d}{4n} ,  \forall i,j \in \{0,\cdots,n-1\} ,  i \ne j .
 \end{align}
 
 Since $\| \boldsymbol{x}_1  \|_{T_p} = \| \boldsymbol{x}_2  \|_{T_p} = \cdots = \| \boldsymbol{x}_{n-1}  \|_{T_p} $, it follows that
 \begin{align}
     \sum_{k=1}^{n-1} {\| \boldsymbol{x}_k \|_{T_2} }  = \sqrt{(n-1) \sum_{k=1}^{n-1} {\| \boldsymbol{x}_k \|^2_{T_2} } }  . 
 \end{align}
 
 Together with Eq.(\ref{CT2}), we know that
 \begin{align}
 \label{T2s}
     \sum_{k=1}^{n-1} {\| \boldsymbol{x}_k \|_{T_2} }  = \sqrt{(n-1) \sum_{k=1}^{n-1} {\| \boldsymbol{x}_k \|^2_{T_2} } }  = (n-1) \sqrt{\frac{d(n+1)}{12n}}.
 \end{align}
 
 Plug Eq.(\ref{T2s}) into Eq.(\ref{CpairTp}), if follows  that
 \begin{align}
      \|{\bf{x}}_i - {\bf{x}}_j \|_{T_2} = \sqrt{\frac{(n+1)d}{12n} } ,  \forall i,j \in \{0,\cdots,n-1\} ,  i \ne j . 
 \end{align}
 
 From Theorem 2, we know that the $l_1$-norm-based and $l_2$-norm-based pairwise toroidal distance of the lattice $X$  attains the upper bound.

\end{proof}

\section{Subgroup-based QMC on Sphere $\mathbb{S}^{d-1}$}
\label{sQMC}

In this section, we propose a closed-form subgroup-based QMC method  on the sphere $\mathbb{S}^{d-1}$ instead of unit cube $[0,1]^d$.
QMC uniformly on sphere can be used to construct samples for isotropic distribution, which is 
 helpful for variance reduction of the gradient estimators in Evolutionary strategy for reinforcement learning~\cite{salimans2017evolution}.

Lyu~\cite{lyu2017spherical} constructs structured sampling matrix on $\mathbb{S}^{d-1}$ by minimizing the  discrete Riesz energy. In contrast, we  construct samples by a closed-form construction without the time-consuming  optimization procedure. Our construction can achieve a small mutual coherence.

Without loss of generality, we assume that $d=2m, N=2n$, and $n$ is a prime such that $m|(n-1)$. Let $ F \in {\mathbb{C} ^{n \times n}} $ be a $n \times n$ discrete Fourier matrix. $ {F_{k,j}} = {e^{\frac{{2\pi ikj}}{n}}} $ is the $(k, j) ^{th} $entry of  $F$, where $i = \sqrt { - 1} $. Let $\Lambda  = \{{k_1},{k_2},...,{k_m} \} \subset \{ 1,...,n - 1\} $ be a subset of indexes.

The structured sampling matrix $\bf{V}$ in \cite{lyu2017spherical} can be defined as equation (\ref{eq22}).
\begin{equation}
\label{eq22}
\begin{array}{l}
\bf{V} = \frac{1}{{\sqrt m }}\left[ {\begin{array}{*{20}{c}}
{{\mathop{\rm Re}\nolimits} {F_\Lambda }}&{ - {\mathop{\rm Im}\nolimits} {F_\Lambda }}\\
{{\mathop{\rm Im}\nolimits} {F_\Lambda }}&{{\mathop{\rm Re}\nolimits} {F_\Lambda }}
\end{array}} \right] \in {\mathbb{R}^{d \times N}}
\end{array}
\end{equation}
where  $\text{Re}$ and $\text{Im}$ denote the real and image part of a complex number,   and $ {F_\Lambda } $ in equation (\ref{eq23}) is the matrix constructed by $m$ rows of $F$
\begin{equation}
\label{eq23}
\begin{array}{l}
{F_\Lambda }{\rm{ = }}\left[ {\begin{array}{*{20}{c}}
{{e^{\frac{{2\pi i{k_1}1}}{n}}}}& \cdots &{{e^{\frac{{2\pi i{k_1}n}}{n}}}}\\
 \vdots & \ddots & \vdots \\
{{e^{\frac{{2\pi i{k_m}1}}{n}}}}& \cdots &{{e^{\frac{{2\pi i{k_m}n}}{n}}}}
\end{array}} \right] \in {\mathbb{C}^{m \times n}}.
\end{array}
\end{equation}

With the $\bf{V}$ given in equation~(\ref{eq22}), we know that ${\left\| {{{\boldsymbol{v}}_i}} \right\|_2} = 1$ for $i \in \{1,...,n \}$. Thus, each column of matrix $\bf{V}$ is a point on $\mathbb{S}^{d-1}$.

Let $g$ denote a primitive root modulo $n$.  We construct the index $\Lambda  = \{{k_1},{k_2},...,{k_m} \}$ as 
\begin{equation}
    \label{Sindex}
   \Lambda =  \{ g^0,g^{\frac{n-1}{m}}, g^{\frac{2(n-1)}{m}},\cdots, g^{\frac{(m-1)(n-1)}{m} } \} \; \text{mod} \; n. 
\end{equation}

The set $ \{ g^0,g^{\frac{n-1}{m}}, g^{\frac{2(n-1)}{m}},\cdots, g^{\frac{(m-1)(n-1)}{m} } \} \; \text{mod} \; n $ forms a subgroup of the the group $\{g^0,g^1,\cdots,g^{n-2}\}$ mod $n$.   Based on this, we derive upper bounds of the mutual coherence of the points set $\boldsymbol{V}$.  The results are summarized in Theorem~\ref{Sbound1}
and Theorem~\ref{Sbound2}.

\begin{theorem}
\label{Sbound1}
Suppose $d=2m, N=2n$, and $n$ is a prime such that $m|(n-1)$. Construct matrix $\boldsymbol{V}$ as in Eq.(\ref{eq22}) with index set $\Lambda$  as Eq.(\ref{Sindex}). Let mutual coherence $\mu(\boldsymbol{V}):= \max_{i \ne j} |\boldsymbol{v}_i^\top \boldsymbol{v}_j| $. Then  $\mu(\boldsymbol{V}) \le \frac{\sqrt{n}}{m}$.
\end{theorem}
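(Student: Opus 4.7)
The plan is to reduce the mutual coherence of $\boldsymbol{V}$ to a character sum over the multiplicative subgroup $\Lambda$, and then invoke the classical Gauss sum bound. First I would denote by $\boldsymbol{f}_j \in \mathbb{C}^m$ the $j$-th column of $F_\Lambda$, so that the columns of $\boldsymbol{V}$ come in two families $\boldsymbol{v}_j^{(1)} = \frac{1}{\sqrt{m}}[\mathrm{Re}(\boldsymbol{f}_j);\mathrm{Im}(\boldsymbol{f}_j)]$ and $\boldsymbol{v}_j^{(2)} = \frac{1}{\sqrt{m}}[-\mathrm{Im}(\boldsymbol{f}_j);\mathrm{Re}(\boldsymbol{f}_j)]$ for $j \in \{1,\dots,n\}$. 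A direct block computation shows that any inner product $\boldsymbol{v}_i^\top \boldsymbol{v}_j$ is either $\frac{1}{m}\mathrm{Re}(\boldsymbol{f}_i^* \boldsymbol{f}_j)$ or $\pm\frac{1}{m}\mathrm{Im}(\boldsymbol{f}_i^* \boldsymbol{f}_j)$; in both cases $|\boldsymbol{v}_i^\top \boldsymbol{v}_j| \le \frac{1}{m}|\boldsymbol{f}_i^* \boldsymbol{f}_j|$. Since $\boldsymbol{f}_i^* \boldsymbol{f}_j = S(j-i)$ where
\begin{equation*}
S(t) := \sum_{k \in \Lambda} e^{2\pi i k t/n},
\end{equation*}
the bound reduces to showing $|S(t)| \le \sqrt{n}$ for every $t \not\equiv 0 \pmod n$ (the case $t\equiv 0$ corresponds to identical columns or to the orthogonal pair $\boldsymbol{v}_j^{(1)}, \boldsymbol{v}_j^{(2)}$, for which the inner product vanishes).

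Next, I would use Fourier analysis on $(\mathbb{Z}/n\mathbb{Z})^*$, which is valid because $n$ is prime and $\Lambda$ is a multiplicative subgroup of order $m$ and index $d=(n-1)/m$. The multiplicative characters trivial on $\Lambda$ form a group of order $d$, and their average is the indicator of $\Lambda$, giving
\begin{equation*}
S(t) \;=\; \frac{1}{d}\sum_{\chi|_\Lambda = 1}\;\sum_{k=1}^{n-1}\chi(k)\,e^{2\pi i k t/n}.
\end{equation*}
The contribution from the principal character is $\sum_{k=1}^{n-1} e^{2\pi i k t/n}=-1$ for $t\not\equiv 0$, and for each non-principal $\chi$ the inner sum equals $\overline{\chi(t)}\,g(\chi)$, where $g(\chi)$ is the classical Gauss sum, satisfying $|g(\chi)|=\sqrt{n}$. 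The triangle inequality then yields $|S(t)| \le \frac{1}{d}\bigl(1+(d-1)\sqrt{n}\bigr)$, which a one-line estimate shows is at most $\sqrt{n}$ (equivalent to $1 \le \sqrt{n}$). Combining with the first paragraph gives $\mu(\boldsymbol{V}) \le \sqrt{n}/m$.

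The hard part will be step two: the absolute-value formula $|g(\chi)|=\sqrt{n}$ for non-principal multiplicative characters modulo a prime is the nontrivial analytic input, but it is a standard textbook fact that I would cite rather than reprove. Everything else is routine bookkeeping about the block structure of $\boldsymbol{V}$ and character orthogonality on $(\mathbb{Z}/n\mathbb{Z})^*$, both of which go through cleanly because $n$ is prime and $m \mid (n-1)$, so $\Lambda$ is genuinely a subgroup rather than an arbitrary index set.
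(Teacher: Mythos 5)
Your proof is correct, and it follows the paper's reduction up to the key estimate but then takes a genuinely different route to establish it. The bookkeeping in your first paragraph — reducing $\mu(\boldsymbol{V})$ to $\frac{1}{m}\max_{t\not\equiv 0}|S(t)|$ with $S(t)=\sum_{k\in\Lambda}e^{2\pi i kt/n}$, and disposing of the $t\equiv 0$ pairs via the orthogonality of $\boldsymbol{v}_j^{(1)}$ and $\boldsymbol{v}_j^{(2)}$ — is exactly what the paper does. The difference is the bound $|S(t)|\le\sqrt{n}$: the paper simply cites Bourgain's estimates for exponential sums over multiplicative subgroups of $(\mathbb{Z}/n\mathbb{Z})^*$, whereas you derive it by expanding the indicator of $\Lambda$ over the characters trivial on $\Lambda$, evaluating each twisted complete sum as $\overline{\chi(t)}\,g(\chi)$ with $|g(\chi)|=\sqrt{n}$, and applying the triangle inequality; the resulting estimate $\frac{1}{D}\bigl(1+(D-1)\sqrt{n}\bigr)\le\sqrt{n}$ (with $D=(n-1)/m$ the index of $\Lambda$) is exactly what is needed, and the argument is sound, including the degenerate case $D=1$ where $|S(t)|=1$. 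Your route buys self-containedness: it rests only on the classical Gauss-sum modulus for a non-principal character modulo a prime, and it shows the paper's citation is stronger than necessary for this particular theorem — the Bourgain/Shkredov machinery only becomes essential for the sharper bound on small subgroups in the companion result (Theorem~\ref{Sbound2}). One cosmetic caveat: you reuse the symbol $d$ for the index $(n-1)/m$ of $\Lambda$, which collides with the theorem's $d=2m$ and is a different quantity in general; rename it to avoid confusion.
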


\begin{theorem}
\label{Sbound2}
Suppose $d=2m, N=2n$, and $n$ is a prime such that $m|(n-1)$, and $m \le n^\frac{2}{3}$. Construct matrix $\boldsymbol{V}$ as in Eq.(\ref{eq22}) with index set $\Lambda$  as Eq.(\ref{Sindex}). Let mutual coherence $\mu(\boldsymbol{V}):= \max_{i \ne j} |\boldsymbol{v}_i^\top \boldsymbol{v}_j| $. Then  $\mu(\boldsymbol{V}) \le C  m^{-1/2}n^{1/6} \log^{1/6}m$, where $C$ denotes a positive constant independent of $m$ and $n$.
\end{theorem}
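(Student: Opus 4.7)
\textbf{Proof plan for Theorem~\ref{Sbound2}.}

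The plan is to reduce the mutual coherence bound to an exponential sum over the multiplicative subgroup $\Lambda \subset \mathbb{F}_n^*$, and then invoke a deep estimate from analytic number theory on such subgroup sums. The first step is essentially algebraic bookkeeping; the second step is the whole content of the theorem and is where the hypothesis $m \le n^{2/3}$ enters.

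First, I would unpack the block structure of $\boldsymbol{V}$ in Eq.~(\ref{eq22}) to rewrite every off-diagonal inner product as a single exponential sum. The columns of $\boldsymbol{V}$ are naturally indexed by pairs $(j,\epsilon)$ with $j \in \{1,\dots,n\}$ and $\epsilon \in \{0,1\}$ (left/right half). Using the identity $\mathrm{Re}(z_1)\mathrm{Re}(z_2)+\mathrm{Im}(z_1)\mathrm{Im}(z_2)=\mathrm{Re}(z_1\overline{z_2})$ (and the analogous identity involving $\mathrm{Im}$ for cross-block pairs), a direct expansion shows that for any two distinct columns indexed by $(a,\epsilon_1)$ and $(b,\epsilon_2)$ the inner product equals $\tfrac{1}{m}$ times either the real or the imaginary part of
\begin{align*}
S_t(\Lambda) \;:=\; \sum_{k\in\Lambda} e^{2\pi i k t/n}, \qquad t \equiv a-b \pmod{n}.
\end{align*}
Because $a,b\in\{1,\dots,n\}$ and the pair is distinct, one checks that either $t\not\equiv 0\pmod n$, or the sum is purely real/imaginary in a way that kills the contribution (e.g.\ $\mathrm{Im}(m)=0$ when $a=b$). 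Hence
\begin{align*}
\mu(\boldsymbol{V}) \;\le\; \frac{1}{m}\max_{t\not\equiv 0\,(\mathrm{mod}\,n)} |S_t(\Lambda)|.
\end{align*}

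Second, I would invoke a subgroup Gauss sum estimate. The construction~(\ref{Sindex}) makes $\Lambda$ precisely the unique multiplicative subgroup of $\mathbb{F}_n^*$ of order $m$ (it is the image of the $((n-1)/m)$-th power map). Maximal exponential sums over such subgroups have been studied extensively by Heath-Brown--Konyagin, Konyagin, and Bourgain--Glibichuk--Konyagin. In the range $m \le n^{2/3}$ their work yields the bound
\begin{align*}
\max_{t\not\equiv 0\,(\mathrm{mod}\,n)} |S_t(\Lambda)| \;\le\; C'\, m^{1/2}\, n^{1/6}\, (\log m)^{1/6}
\end{align*}
for an absolute constant $C'$. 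Combining with the reduction above and absorbing constants gives $\mu(\boldsymbol{V}) \le C\, m^{-1/2} n^{1/6}(\log m)^{1/6}$, as claimed.

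The main obstacle is squarely the second step: the subgroup exponential sum bound rests on nontrivial sum-product estimates in $\mathbb{F}_n$ (Stepanov-type polynomial constructions combined with the Bourgain--Glibichuk--Konyagin machinery), and I would use it as a black box rather than reprove it. The hypothesis $m \le n^{2/3}$ is precisely what is needed to make the sum-product input applicable; outside this range one only recovers the trivial bound $\sqrt{n}$ used in Theorem~\ref{Sbound1}. The reduction in the first step, by contrast, is purely mechanical once the block structure of $\boldsymbol{V}$ is written out, and verifying that $\Lambda$ really is the order-$m$ multiplicative subgroup is immediate from the primitive-root construction together with $m \mid (n-1)$.
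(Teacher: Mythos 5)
Your proposal is correct and takes essentially the same route as the paper: reduce $\mu(\boldsymbol{V})$ to $\frac{1}{m}\max_{t\not\equiv 0 \,(\mathrm{mod}\, n)}\bigl|\sum_{z\in\Lambda}e^{2\pi i zt/n}\bigr|$ via the real/imaginary block structure (the paper establishes exactly this in the proof of Theorem~\ref{Sbound1} and reuses it), and then apply, as a black box, a maximal exponential-sum bound over multiplicative subgroups of order $m\le n^{2/3}$. The only cosmetic difference is attribution: the paper quotes the bound $m^{1/2}n^{1/6}\log^{1/6}m$ from Shkredov's result, while you cite the earlier Heath-Brown--Konyagin and Bourgain--Glibichuk--Konyagin line of work (the specific $\log^{1/6}m$ form is Shkredov's refinement), which does not affect the validity of the argument.
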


Theorem~\ref{Sbound1} and Theorem~\ref{Sbound2} show that our construction can achieve a bounded mutual coherence. A smaller mutual coherence means that the points are more evenly spread on sphere $\mathbb{S}^{d-1}$.

\textbf{Remark:} Our construction does not require a restrictive constraint of the dimension of data. The only assumption of data dimension $d$ is that $d$ is a even number, i.e.,$2|d$, which is commonly satisfied in practice. Moreover, the product $\boldsymbol{V}^\top\boldsymbol{x}$ can be accelerated by fast Fourier transform as in \cite{lyu2017spherical}.

\subsection{Evaluation of the mutual coherence}

We evaluate our subgroup-based spherical QMC by comparing with the construction in \cite{lyu2017spherical} and i.i.d Gaussian sampling.

We set the dimension $d$ as in $\{50,100,200,500, 1000\}$. For each dimension $d$, we set the number of points $N=2n$, with $n$ as the first ten prime numbers such that $\frac{d}{2}$ divides $n\!-\!1$, i.e., $\frac{d}{2} \big | (n\!-\!1)$. Both subgroup-based QMC and Lyu's method are deterministic.  For Gaussian sampling method, we report the mean $\pm$ standard deviation of  mutual coherence over 50 independent runs. 
The mutual coherence for each dimension are reported in Table~\ref{tab:Coherence}.  The  smaller the mutual coherence, the better. 

We can observe that our subgroup-based spherical QMC achieves a competitive mutual coherence compared with Lyu's method in \cite{lyu2017spherical}. Note that our method does not require a time consuming optimization procedure, thus it is appealing for applications that demands a fast construction. Moreover, both our subgroup-based QMC and Lyu's method obtain a significant smaller coherence than i.i.d Gaussian sampling.


\begin{table}[t]
\centering
\begin{footnotesize}
\caption{Mutual coherence of points set constructed by different methods. Smaller is better.}
\label{tab:Coherence}
\resizebox{\columnwidth}{!}{
\begin{tabular}{ccccccccccccc}
\toprule
\multirow{4}{*}{\begin{tabular}[c]{@{}l@{}} d=50 \end{tabular}}
                          &          &   202    &     302   &      502    &     802    &    1202    &    1402   &     1502   &  2102   &     2302   &     2402
          \\  
&     SubGroup  & \cellcolor[HTML]{C0C0C0} \textbf{0.1490}   & \cellcolor[HTML]{C0C0C0} \textbf{0.2289}    & \cellcolor[HTML]{C0C0C0} \textbf{0.1923}   & \cellcolor[HTML]{C0C0C0} 0.2930  & \cellcolor[HTML]{C0C0C0} \textbf{0.2608}   & \cellcolor[HTML]{C0C0C0} 0.3402  & \cellcolor[HTML]{C0C0C0} 0.3358  & \cellcolor[HTML]{C0C0C0} 0.3211  & \cellcolor[HTML]{C0C0C0} 0.4534 & \cellcolor[HTML]{C0C0C0} \textbf{0.3353}   \\   
& Lyu~\cite{lyu2017spherical}      & 0.2313  &   0.2377  &   0.2901  &  \textbf{0.2902}    &  0.3005  &  \textbf{0.3154}    &  \textbf{0.3155}   &  \textbf{0.3209}   &  \textbf{0.3595}   & 0.3718     \\  

& \multirow{2}{*}{Gaussian} & \cellcolor[HTML]{C0C0C0} 0.5400$\pm$  & \cellcolor[HTML]{C0C0C0} 0.5738$\pm$ & \cellcolor[HTML]{C0C0C0} 0.5904$\pm$  & \cellcolor[HTML]{C0C0C0} 0.6158$\pm$  & \cellcolor[HTML]{C0C0C0} 0.6270$\pm$  & \cellcolor[HTML]{C0C0C0} 0.6254$\pm$  & \cellcolor[HTML]{C0C0C0} 0.6328$\pm$  & \cellcolor[HTML]{C0C0C0} 0.6447$\pm$ &  \cellcolor[HTML]{C0C0C0} 0.6520$\pm$ & \cellcolor[HTML]{C0C0C0} 0.6517$\pm$   \\ 
                                   &                   &      \cellcolor[HTML]{C0C0C0}0.0254  &  \cellcolor[HTML]{C0C0C0} 0.0291 &  \cellcolor[HTML]{C0C0C0} 0.0257  & \cellcolor[HTML]{C0C0C0} 0.0249   &  \cellcolor[HTML]{C0C0C0} 0.0209  & \cellcolor[HTML]{C0C0C0} 0.0184 &  \cellcolor[HTML]{C0C0C0} 0.0219  & \cellcolor[HTML]{C0C0C0} 0.0184  & \cellcolor[HTML]{C0C0C0} 0.0204  & \cellcolor[HTML]{C0C0C0} 0.0216 \\  
                            
\midrule

\multirow{4}{*}{\begin{tabular}[c]{@{}l@{}} d=100 \end{tabular}}
                  &     &  202    &     302   &      502     &    802    &    1202    &    1402     &   1502 &  2102     &   2302    &    2402      \\   
& SubGroup & \cellcolor[HTML]{C0C0C0} \textbf{0.1105}   & \cellcolor[HTML]{C0C0C0} \textbf{0.1529}    & \cellcolor[HTML]{C0C0C0} 0.1923  & \cellcolor[HTML]{C0C0C0} \textbf{0.1764}   & \cellcolor[HTML]{C0C0C0} 0.2397 & \cellcolor[HTML]{C0C0C0}  0.2749  & \cellcolor[HTML]{C0C0C0} 0.2513  & \cellcolor[HTML]{C0C0C0} 0.2679   & \cellcolor[HTML]{C0C0C0} 0.4534 & \cellcolor[HTML]{C0C0C0} 0.3353  \\  
& Lyu~\cite{lyu2017spherical}      &  0.1234  &  0.1581  &  \textbf{0.1586}   &  0.1870  &  \textbf{0.2041}   &   \textbf{0.2191}   &  \textbf{0.1976}   & \textbf{0.2047}   & \textbf{0.2244}  & \textbf{0.2218}    \\   

& \multirow{2}{*}{Gaussian} & \cellcolor[HTML]{C0C0C0} 0.4033$\pm$  & \cellcolor[HTML]{C0C0C0}  0.4210$\pm$  & \cellcolor[HTML]{C0C0C0} 0.4422$\pm$  & \cellcolor[HTML]{C0C0C0} 0.4577$\pm$  & \cellcolor[HTML]{C0C0C0} 0.4616$\pm$  & \cellcolor[HTML]{C0C0C0} 0.4734$\pm$  & \cellcolor[HTML]{C0C0C0} 0.4716$\pm$  & \cellcolor[HTML]{C0C0C0} 0.4878$\pm$  & \cellcolor[HTML]{C0C0C0} 0.4866$\pm$  & \cellcolor[HTML]{C0C0C0} 0.4947$\pm$   \\ 
                                   &                   &   \cellcolor[HTML]{C0C0C0}  0.0272 & \cellcolor[HTML]{C0C0C0}  0.0274  & \cellcolor[HTML]{C0C0C0} 0.0225   & \cellcolor[HTML]{C0C0C0} 0.0230  & \cellcolor[HTML]{C0C0C0} 0.0170  & \cellcolor[HTML]{C0C0C0} 0.0174 & \cellcolor[HTML]{C0C0C0}  0.0234  & \cellcolor[HTML]{C0C0C0} 0.0167  &  \cellcolor[HTML]{C0C0C0} 0.0172 & \cellcolor[HTML]{C0C0C0} 0.0192\\

\midrule

\multirow{4}{*}{\begin{tabular}[c]{@{}l@{}} d=200 \end{tabular}}
               &     &  202   &      802    &    1202    &    1402   &     2402    &    2602    &    3202 &  3602    &    3802    &    5602
        \\  
& SubGroup & \cellcolor[HTML]{C0C0C0} \textbf{0.0100}    & \cellcolor[HTML]{C0C0C0} 0.1251  & \cellcolor[HTML]{C0C0C0} 0.1835  & \cellcolor[HTML]{C0C0C0} 0.1966  & \cellcolor[HTML]{C0C0C0} 0.2365 & \cellcolor[HTML]{C0C0C0}  0.1553  & \cellcolor[HTML]{C0C0C0} 0.1910  & \cellcolor[HTML]{C0C0C0} 0.1914  & \cellcolor[HTML]{C0C0C0} 0.2529  & \cellcolor[HTML]{C0C0C0} 0.2457    \\  
& Lyu~\cite{lyu2017spherical}      &  \textbf{0.0100}   &  \textbf{0.1108}  &  \textbf{0.1223}  &  \textbf{0.1262}   & \textbf{0.1417}   & \textbf{0.1444}   & \textbf{0.1505}    & \textbf{0.1648}   & \textbf{0.1624}  & \textbf{0.1679}          \\ 

& \multirow{2}{*}{Gaussian} & \cellcolor[HTML]{C0C0C0} 0.2887$\pm$  & \cellcolor[HTML]{C0C0C0} 0.3295$\pm$  & \cellcolor[HTML]{C0C0C0}  0.3362$\pm$  & \cellcolor[HTML]{C0C0C0} 0.3447$\pm$ & \cellcolor[HTML]{C0C0C0}  0.3564$\pm$  & \cellcolor[HTML]{C0C0C0} 0.3578$\pm$  & \cellcolor[HTML]{C0C0C0} 0.3645$\pm$  & \cellcolor[HTML]{C0C0C0} 0.3648$\pm$  & \cellcolor[HTML]{C0C0C0} 0.3689$\pm$ & \cellcolor[HTML]{C0C0C0} 0.3768$\pm$  \\ 
                                   &                   &   \cellcolor[HTML]{C0C0C0}  0.0163  & \cellcolor[HTML]{C0C0C0}  0.0155  & \cellcolor[HTML]{C0C0C0} 0.0148  & \cellcolor[HTML]{C0C0C0} 0.0182  & \cellcolor[HTML]{C0C0C0} 0.0140  & \cellcolor[HTML]{C0C0C0} 0.0142 & \cellcolor[HTML]{C0C0C0}  0.0143  & \cellcolor[HTML]{C0C0C0} 0.0142  & \cellcolor[HTML]{C0C0C0} 0.0140  & \cellcolor[HTML]{C0C0C0} 0.0151\\

\midrule

\multirow{4}{*}{\begin{tabular}[c]{@{}l@{}} d=500 \end{tabular}}
            &     & 502    &    1502   &     4502   &     6002    &    6502     &   8002    &    9502 &  11002   &    14002   &    17002       \\  
& SubGroup & \cellcolor[HTML]{C0C0C0} \textbf{0.0040} &  \cellcolor[HTML]{C0C0C0} 0.0723  & \cellcolor[HTML]{C0C0C0} 0.1051 & \cellcolor[HTML]{C0C0C0}  0.1209  &  \cellcolor[HTML]{C0C0C0} 0.1107 &  \cellcolor[HTML]{C0C0C0} 0.1168  & \cellcolor[HTML]{C0C0C0} 0.1199  & \cellcolor[HTML]{C0C0C0} 0.1425  & \cellcolor[HTML]{C0C0C0} 0.1587  & \cellcolor[HTML]{C0C0C0} 0.1273  \\  
& Lyu~\cite{lyu2017spherical}      &  \textbf{0.0040}   & \textbf{0.0650}   & \textbf{0.0946}  & \textbf{0.0934}   & \textbf{0.0930}    & \textbf{0.1004}  &  \textbf{0.0980}    & \textbf{0.1022}   &  \textbf{0.1077}   & \textbf{0.1110}         \\ 

& \multirow{2}{*}{Gaussian} & \cellcolor[HTML]{C0C0C0} 0.2040$\pm$  & \cellcolor[HTML]{C0C0C0}  0.2218$\pm$  & \cellcolor[HTML]{C0C0C0} 0.2388$\pm$  & \cellcolor[HTML]{C0C0C0} 0.2425$\pm$  & \cellcolor[HTML]{C0C0C0} 0.2448$\pm$  & \cellcolor[HTML]{C0C0C0} 0.2498$\pm$  & \cellcolor[HTML]{C0C0C0} 0.2528$\pm$  & \cellcolor[HTML]{C0C0C0} 0.2527$\pm$ & \cellcolor[HTML]{C0C0C0}  0.2579$\pm$  & \cellcolor[HTML]{C0C0C0} 0.2607$\pm$  \\ 
                                   &                   &   \cellcolor[HTML]{C0C0C0}     0.0111   & \cellcolor[HTML]{C0C0C0} 0.0099   & \cellcolor[HTML]{C0C0C0} 0.0092   & \cellcolor[HTML]{C0C0C0} 0.0081   & \cellcolor[HTML]{C0C0C0} 0.0113   & \cellcolor[HTML]{C0C0C0} 0.0110   & \cellcolor[HTML]{C0C0C0} 0.0100   & \cellcolor[HTML]{C0C0C0} 0.0084  & \cellcolor[HTML]{C0C0C0} 0.0113 & \cellcolor[HTML]{C0C0C0} 0.0092\\
                                   
\midrule

\multirow{4}{*}{\begin{tabular}[c]{@{}l@{}} d=1000 \end{tabular}}
            &     & 6002    &    8002   &    11002    &   14002    &   17002    &   18002   &    21002   &    26002   &    32002   &    38002      \\  
& SubGroup & \cellcolor[HTML]{C0C0C0} 0.0754  & \cellcolor[HTML]{C0C0C0} 0.0778  & \cellcolor[HTML]{C0C0C0} 0.0819  & \cellcolor[HTML]{C0C0C0} 0.0921  & \cellcolor[HTML]{C0C0C0} 0.0935  & \cellcolor[HTML]{C0C0C0} 0.0764  & \cellcolor[HTML]{C0C0C0} 0.1065  & \cellcolor[HTML]{C0C0C0} 0.0931 & \cellcolor[HTML]{C0C0C0}  0.0908  & \cellcolor[HTML]{C0C0C0} 0.1125  \\   
& Lyu~\cite{lyu2017spherical}      & \textbf{0.0594}    &  \textbf{0.0637}   & \textbf{0.0662}  &  \textbf{0.0680}     &  \textbf{0.0684}   &  \textbf{0.0744}   & \textbf{0.0774}   &  \textbf{0.0815}   & \textbf{0.0781}   & \textbf{0.0814}          \\  

& \multirow{2}{*}{Gaussian} & \cellcolor[HTML]{C0C0C0} 0.1736$\pm$ & \cellcolor[HTML]{C0C0C0}  0.1764$\pm$  & \cellcolor[HTML]{C0C0C0} 0.1797$\pm$  & \cellcolor[HTML]{C0C0C0} 0.1828$\pm$  & \cellcolor[HTML]{C0C0C0} 0.1846$\pm$  & \cellcolor[HTML]{C0C0C0} 0.1840$\pm$  & \cellcolor[HTML]{C0C0C0} 0.1869$\pm$  & \cellcolor[HTML]{C0C0C0} 0.1888$\pm$  & \cellcolor[HTML]{C0C0C0} 0.1909$\pm$  & \cellcolor[HTML]{C0C0C0} 0.1920$\pm$  \\ 
                                   &                   &   \cellcolor[HTML]{C0C0C0}   \cellcolor[HTML]{C0C0C0} 0.0067  & \cellcolor[HTML]{C0C0C0}  0.0059 & \cellcolor[HTML]{C0C0C0}  0.0060  & \cellcolor[HTML]{C0C0C0} 0.0062  & \cellcolor[HTML]{C0C0C0} 0.0052 & \cellcolor[HTML]{C0C0C0} 0.0057  & \cellcolor[HTML]{C0C0C0} 0.0052   & \cellcolor[HTML]{C0C0C0} 0.0055 & \cellcolor[HTML]{C0C0C0}  0.0067  & \cellcolor[HTML]{C0C0C0} 0.0056 \\

 \bottomrule
\end{tabular}
}
\end{footnotesize}
\vspace{-2mm}
\end{table}
\vspace{-2mm}


\newpage

\section{Proof of Theorem~\ref{Sbound1}}
\label{proofT3}

\begin{proof}
Let $\boldsymbol{c}_i \in \mathbb{C}^m$ be the $i^{th}$ column of matrix $F_{\Lambda} \in \mathbb{C}^{m \times n}$ in Eq.(\ref{eq23}). Let $\boldsymbol{v}_i \in \mathbb{R}^{2m}$ be the  $i^{th}$ column of matrix $\boldsymbol{V} \in \mathbb{R}^{2m \times 2n}$ in Eq.(\ref{eq22}).
For $1 \le i,j \le n$, $i \ne j$, we know that 
\begin{align}
 & \boldsymbol{v}_i^\top \boldsymbol{v}_{i+n} = 0, \\
    &\boldsymbol{v}_{i+n}^\top \boldsymbol{v}_{j+n} = \boldsymbol{v}_i^\top \boldsymbol{v}_j = \text{Re} (\boldsymbol{c}_i^*\boldsymbol{c}_j), \\
    & \boldsymbol{v}_{i+n}^\top \boldsymbol{v}_{j} = -\boldsymbol{v}_i^\top \boldsymbol{v}_{j+n} =  \text{Im} (\boldsymbol{c}_i^*\boldsymbol{c}_j),
\end{align}
where $*$ denote the complex conjugate, $\text{Re}(\cdot)$ and $\text{Im}(\cdot)$ denote the real and image part of the input complex number.

It follows that
\begin{align}
\mu(V) \le=\max_{1 \le k,r \le 2n, k \ne r}    |\boldsymbol{v}_k^\top \boldsymbol{v}_r| \le  \max_{1 \le i,j \le n, i \ne j}    |\boldsymbol{c}_i^*\boldsymbol{v}_j| = \mu(F_\Lambda) 
\end{align}
From the definition of $F_\Lambda$ in Eq.(\ref{eq23}), we know that 
\begin{align}
    \mu(F_\Lambda) & = \max_{1 \le i,j \le n, i \ne j}    |\boldsymbol{c}_i^*\boldsymbol{v}_j| = 
     \max_{1 \le i,j \le n, i \ne j}    { \frac{1}{m} \left| \sum_{z \in \Lambda}{e^{\frac{{2\pi \boldsymbol{i}{z(j-i)}}}{n}}} \right|}  \\
    & = \max_{ 1 \le k \le n-1} { \frac{1}{m} \left| \sum_{z \in \Lambda}{e^{\frac{{2\pi \boldsymbol{i}{zk}}}{n}}} \right|} 
\end{align}
Because $\Lambda$ is a subgroup of the multiplicative group $\{g^0,g^1,\cdots,g^{n-2}\}$ mod $n$, from \cite{bourgain2006estimates}, we know that 
\begin{align}
     \max_{ 1 \le k \le n-1} {  \left| \sum_{z \in \Lambda}{e^{\frac{{2\pi \boldsymbol{i}{zk}}}{n}}} \right|}  \le \sqrt{n}
\end{align}

Finally, we know that 
\begin{align}
  \mu(V) \le   \mu(F_\Lambda) \le \frac{\sqrt{n}}{m}.
\end{align}

\end{proof}

\section{Proof of Theorem~\ref{Sbound2}}
\label{proofT4}

\begin{proof}
Because $\Lambda$ is a subgroup of the multiplicative group $\{g^0,g^1,\cdots,g^{n-2}\}$ mod $n$, and $m \le n^{2/3}$,
from Theorem 1 in \cite{shkredov2014exponential}, we know that
\begin{align}
    \max_{ 1 \le k \le n-1} {  \left| \sum_{z \in \Lambda}{e^{\frac{{2\pi \boldsymbol{i}{zk}}}{n}}} \right|} \le C  m^{1/2}n^{1/6} \log^{1/6}m
\end{align}
From the proof of Theorem~\ref{Sbound1}, we have that
\begin{align}
   \mu(V) \le   \mu(F_\Lambda) = \max_{ 1 \le k \le n-1} { \frac{1}{m} \left| \sum_{z \in \Lambda}{e^{\frac{{2\pi \boldsymbol{i}{zk}}}{n}}} \right|}  \le C  m^{-1/2}n^{1/6} \log^{1/6}m
\end{align}

\end{proof}



\section{QMC for  Generative models}
\label{QMCgm}

Our subgroup rank-1 lattice can be used for generative models. Buchholz et al.~\cite{buchholz2018quasi} suggest using QMC for variational inference to maximize the evidence lower bound (ELBO). We present a new method by directly learning the inverse of the cumulative distribution function (CDF).

In variational autoencoder, the objective is the evidence lower bound (ELBO)~\cite{kingma2013auto} defined as
\begin{align}
 \mathcal{L}(x,\phi,\theta)     = \mathbb{E}_{q_\phi(z|x)} \left[ \log p_{\theta}(x|z) \right] - \text{KL}\left[ q_\phi(z|x) || p_\theta(z) \right]. 
\end{align}

The ELBO consists of two terms, i.e., the reconstruction term $\mathbb{E}_{q_\phi(z|x)} \left[ \log p_{\theta}(x|z) \right]$ and the regularization term $\text{KL}\left[ q_\phi(z|x) || p_\theta(z) \right]$. The reconstruction term is learning to fit, while  the regularization term controls the distance between distribution $q_\phi(z|x)$ to the prior distribution $ p_\theta(z)$.

The reconstruction term $\mathbb{E}_{q_\phi(z|x)} \left[ \log p_{\theta}(x|z) \right]$ can be reformulated as  \begin{align}
   \mathbb{E}_{q_\phi(z|x)} \left[ \log p_{\theta}(x|z) \right] &  =   \int _{\mathcal{Z}} q_\phi(z|x) \log p_{\theta}(x|z)   \text{d}\boldsymbol{z} \\ & = \int _{[0,1]^d}  \log p_{\theta}\left(x| \Phi^{-1} (\boldsymbol{\epsilon}) \right)  \text{d}\boldsymbol{\epsilon}.  
   \label{I}
\end{align}
where $\Phi^{-1}(\cdot)$ denotes  the  inverse cumulative distribution function with respect to the density $q_\phi(z|x)$.

 Eq.(\ref{I}) provides an alternative training scheme, 
we  directly learn the  inverse of  CDF  $F(\boldsymbol{\epsilon};x) = \Phi^{-1}(\boldsymbol{\epsilon})$ given $x$ instead of the density $q_\phi(z|x)$. We  parameterize $F(\boldsymbol{\epsilon},x)$ as a neural network with input $\boldsymbol{\epsilon}$ and data $x$. The inverse of CDF function $F(\boldsymbol{\epsilon},x)$ can be seen as an encoder of $x$ for inference. It is worth  noting that learning the inverse of CDF can bring more flexibility without the assumption of the distribution, e.g., Gaussian.

To ensure the distribution $q$ close to the prior distribution $p(z)$, we can use other  regularization terms instead of the KL-divergence for any implicit distribution $q$, e.g.,  the maximum mean discrepancy. Besides this, we can also use a discriminator-based adversarial loss    similar to adversarial autoencoders~\cite{makhzani2015adversarial}
\begin{equation}
\widetilde{L}(x,F,D)\! =\! \mathbb{E}_{p_\theta(\boldsymbol{z})}\left[ \log (D(\boldsymbol{z})) \right] \! + \!\mathbb{E}_{p_(\boldsymbol{\epsilon})}\left[ \log (1\!-\!D(F(\boldsymbol{\epsilon},x))) \right], 
\end{equation}
where $p(\boldsymbol{\epsilon})$ denotes a uniform distribution on unit cube $[0,1]^d$, $D$ is the discriminator, $F$ denotes the inverse of CDF mapping.

When the domain $\mathcal{Z}$ coincides with  a target domain $\mathcal{Y}$, we  can use an empirical data distribution $Y$ as the prior.  This leads to a training scheme similar to  cycle GAN~\cite{zhu2017unpaired}.  In contrast to cycle GAN, the encoder $F$ depends on both data $x$ in source domain and $\boldsymbol{\epsilon}$ in unit cube.  The expectation term $\mathbb{E}_{p_(\boldsymbol{\epsilon})}[\cdot]$ can be approximated by QMC methods. 

\section{Generative Inference for CycleGAN}
\label{GenerativeExp}

We evaluate our subgroup rank-1 lattice on training generative model. As shown in section~\ref{QMCgm}, we can learn the inverse CDF  functions $F(\boldsymbol{\epsilon},x)$ as a generator from domain $\mathcal{X}$ to domain $\mathcal{Y}$ in cycle GAN. We set $F(\boldsymbol{\epsilon},x)= G_1(x) + G_2(\boldsymbol{\epsilon})$, where $G_1$ and $G_2$ denotes the  neural networks. Network $G_1$  maps input image $x$ to a target mean, while network $G_2$ maps $\boldsymbol{\epsilon} \in [0,1]^d$ as the residue. Similarly, $\widetilde{F}(\boldsymbol{\widetilde{\epsilon}},y) =  \widetilde{G}_1(y) + \widetilde{G}_2(\boldsymbol{\widetilde{\epsilon}})$ denotes an generator from domain $\mathcal{Y}$ to domain $\mathcal{X}$.  

We implement the model  based on the open-sourced Pytorch code of \cite{zhu2017unpaired}.
All $G_1$, $G_2$, $\widetilde{G}_1$ and $\widetilde{G}_2$ employ the default ResNet architecture  with 9 blocks in \cite{zhu2017unpaired}. The input size of both $\boldsymbol{\epsilon}$ and $\boldsymbol{\widetilde{\epsilon}}$ are $d=256 \times 256$.  We keep all the hyperparameters same for all the methods as the default value in \cite{zhu2017unpaired}.

We compare our subgroup rank-1 lattice with Monte Carlo sampling for training the generative model. For subgroup rank-1 lattice, we set the number of points $n=12d+1=786433$. We do not store all the points,  instead  we sample $i\in \{0,\cdots, n-1\} $ uniformly and construct $\boldsymbol{\epsilon}$ and $\boldsymbol{\widetilde{\epsilon}}$ based on Eq.(\ref{rank1lattice}) during the training process. For Monte Carlo sampling, $\boldsymbol{\epsilon}$ and $\boldsymbol{\widetilde{\epsilon}}$ are sampled from  $Uniform [0,1]^d$.


We train generative models on the Vangogh2photo data set and maps data set employed in \cite{zhu2017unpaired}.
We present  experimental results of the generated images from models trained with subgroup-based rank-1 lattice sampling, Monte-Carlo sampling, and standard version of CycleGAN. The experimental results on Vangogh2photo dataset and maps dataset are shown in Figure~\ref{vango1} and Figure~\ref{maps1}, respectively. From Figure~\ref{vango1}, we can observe that the images generated by the model trained with Monte-Carlo sampling have some blurred patches.  This phenomenon  may be because the additional flexibility of randomness makes the training more difficult to converge to a good model. In contrast, the model trained with subgroup-based rank-1 lattice sampling generates more clearer images.  It may be because the rank-1 lattice sampling has finite possible choices, i.e., $n=786433$ possible points in the experiments,  which is much smaller than the case of Monte-Carlo uniform sampling. The rank-1 lattice sampling is more deterministic than Monte Carlo sampling, which alleviates the training difficulty to fit a good model.   Since in our subgroup-based rank-1 lattice it is very simple to construct new samples, it can serve as a good alternative to Monte Carlo sampling for generative model training.

\begin{figure*}[t]
\centering
{
\includegraphics[width=1\linewidth]{./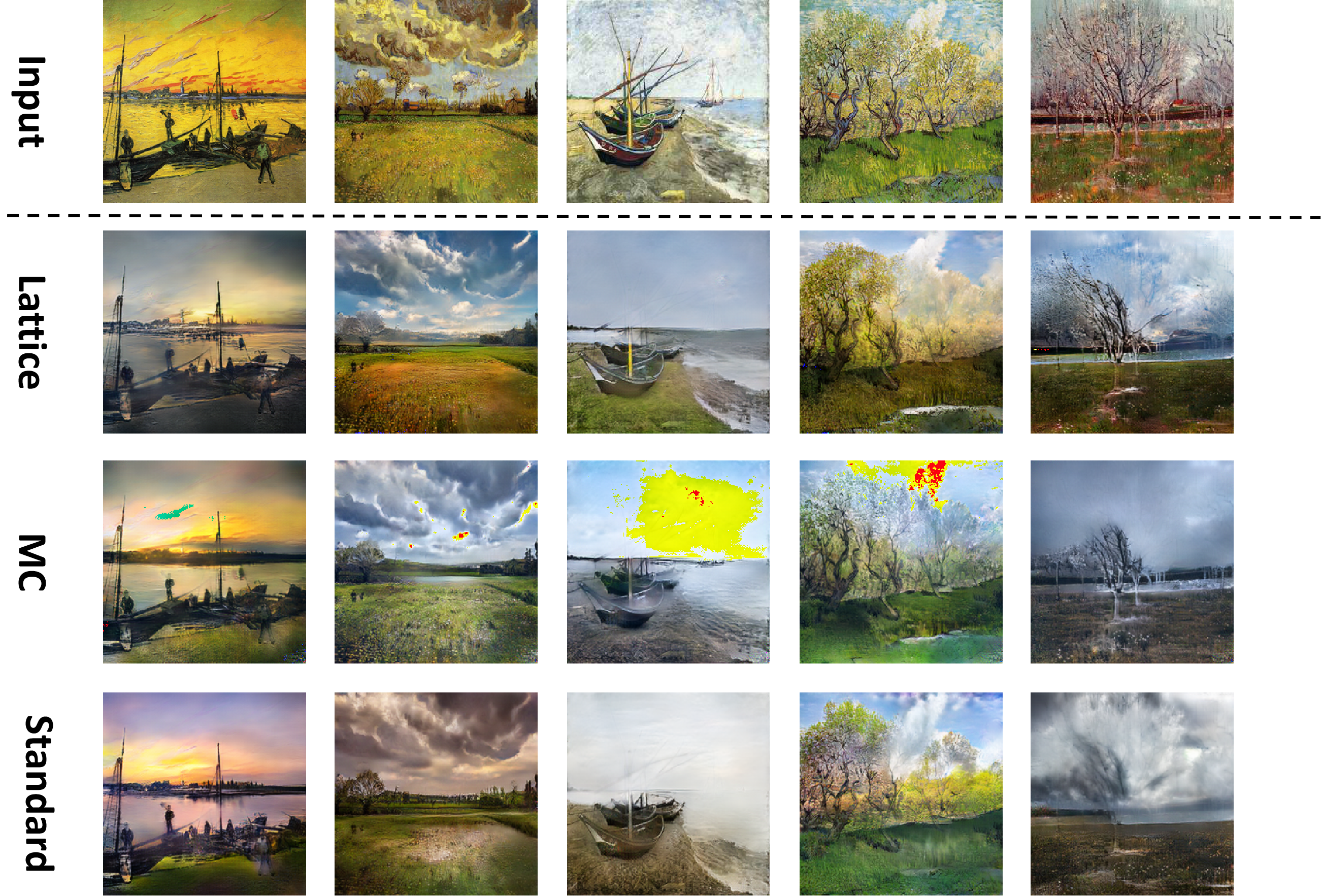}}
\caption{Illustration of the generated images from models trained with subgroup rank-1 lattice sampling, Monte-Carlo sampling, and Standard version of CycleGAN.}
\label{vango1}
\end{figure*}

\begin{figure*}[t]
\centering
{
\includegraphics[width=1\linewidth]{./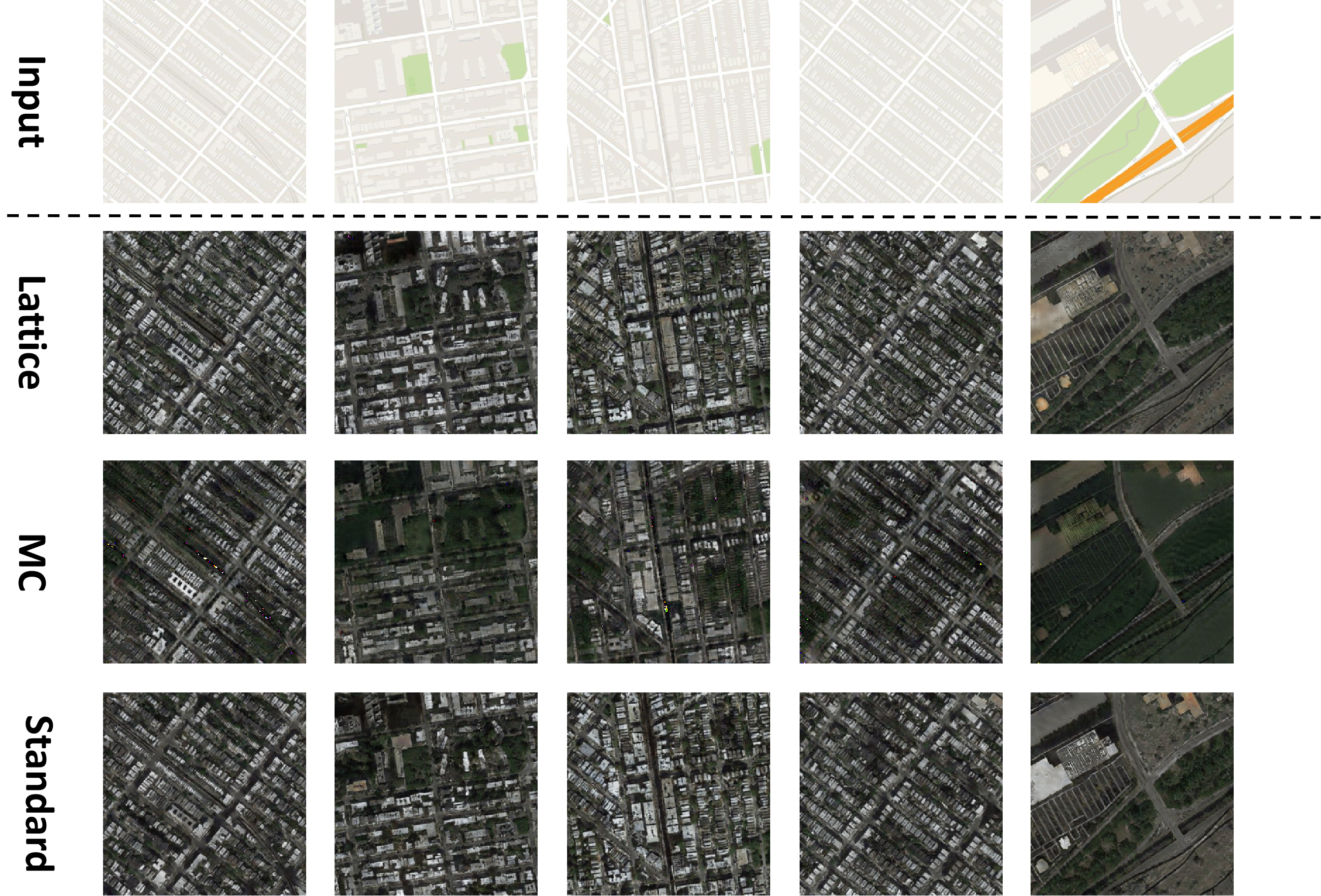}}
\caption{Illustration of the generated images from models trained with subgroup rank-1 lattice sampling, Monte-Carlo sampling, and Standard version of CycleGAN.}
\label{maps1}
\end{figure*}

\begin{figure}[t]
\centering
\subfigure[\scriptsize{$\frac{{{{\left\| {\widetilde K - K} \right\|}_F}}}{{{{\left\| K \right\|}_F}}}$ for Gaussian Kernel }]{
\label{fig2a_K_meanG_SIFT1M}
\includegraphics[width=0.3\linewidth]{./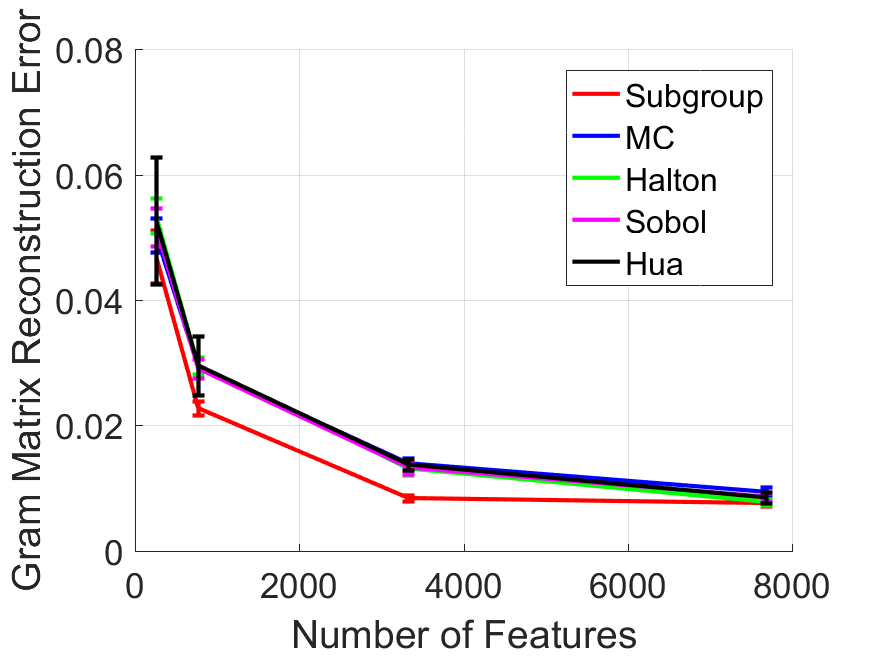}}
\subfigure[\scriptsize{$\frac{{{{\left\| {\widetilde K - K} \right\|}_F}}}{{{{\left\| K \right\|}_F}}}$ for First-order Arc Kernel}]{
\label{fig2c_K_acrCosine_SIFT1M}
\includegraphics[width=0.3\linewidth]{./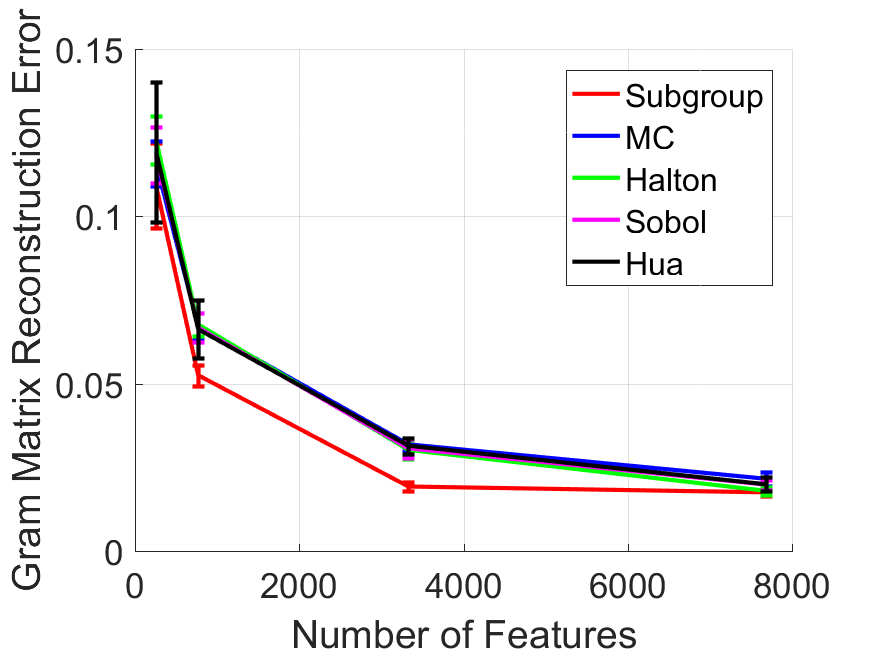}}
\subfigure[\scriptsize{$\frac{{{{\left\| {\widetilde K - K} \right\|}_F}}}{{{{\left\| K \right\|}_F}}}$ for  Zero-order Arc Kernel}]{
\label{fig2c_K_meanAngle_SIFT1M}
\includegraphics[width=0.3\linewidth]{./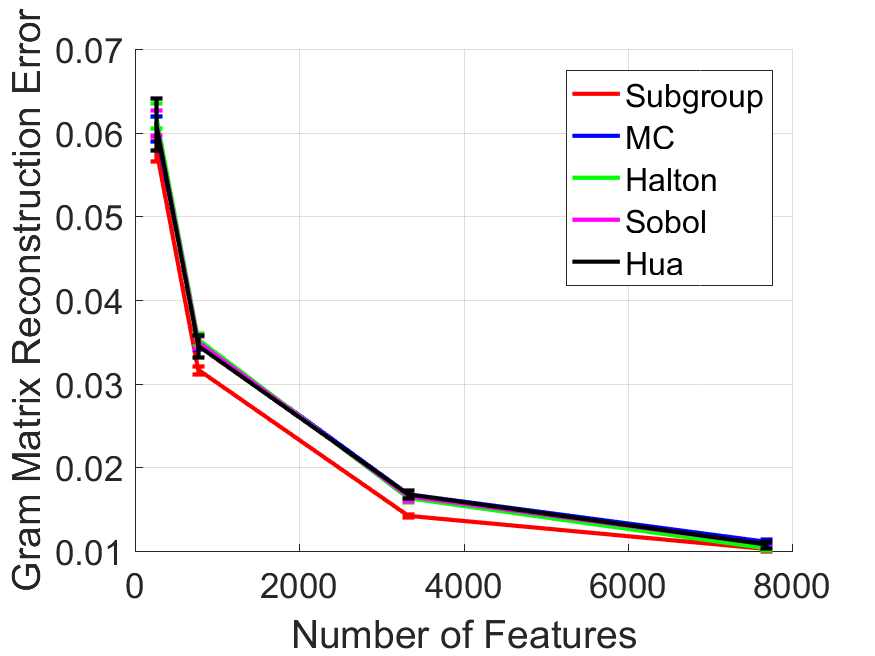}}
\subfigure[\scriptsize{$\frac{{{{\left\| {\widetilde K - K} \right\|}_\infty }}}{{{{\left\| K \right\|}_\infty }}}$ for Gaussian Kernel}]{
\label{fig2c_K_Max_Gaussian_SIFT1M}
\includegraphics[width=0.3\linewidth]{./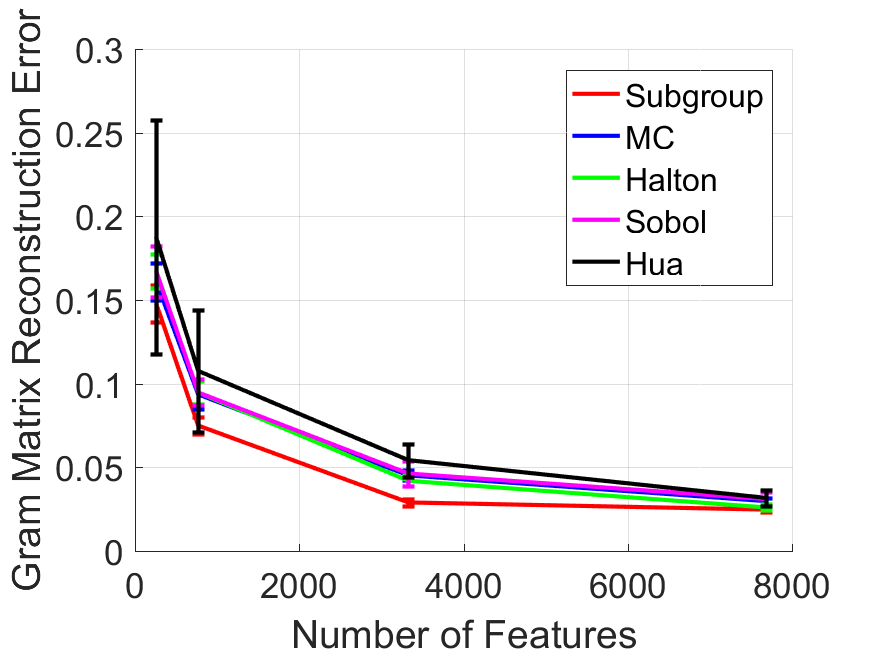}}
\subfigure[\scriptsize{$\frac{{{{\left\| {\widetilde K - K} \right\|}_\infty }}}{{{{\left\| K \right\|}_\infty }}}$  for First-order Arc Kernel}]{
\label{fig2c_K_Max_acrCosine_SIFT1M}
\includegraphics[width=0.3\linewidth]{./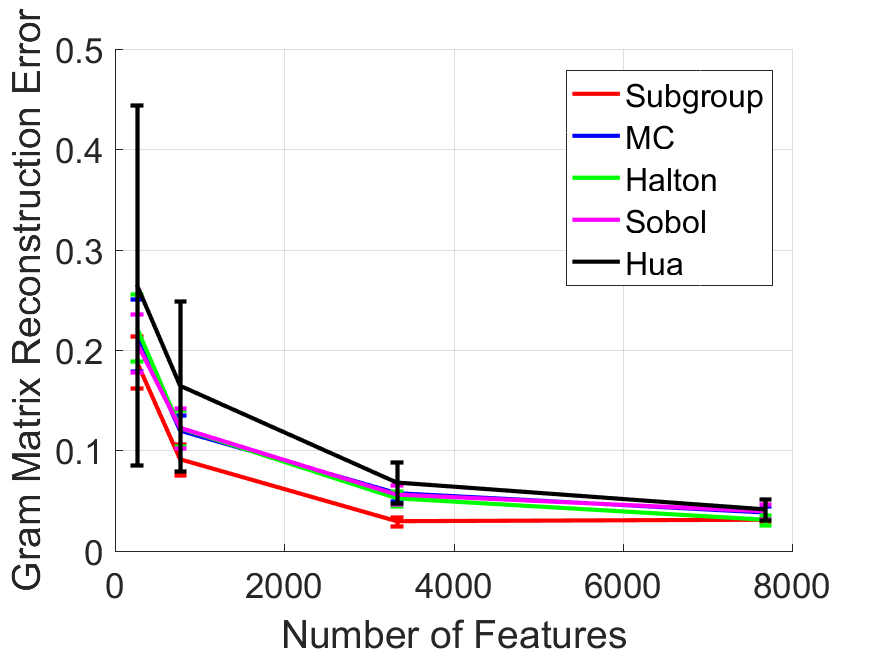}}
\subfigure[\scriptsize{$\frac{{{{\left\| {\widetilde K - K} \right\|}_\infty }}}{{{{\left\| K \right\|}_\infty }}}$  for Zero-order Arc Kernel}]{
\label{fig2c_K_maxAngle_SIFT1M}
\includegraphics[width=0.3\linewidth]{./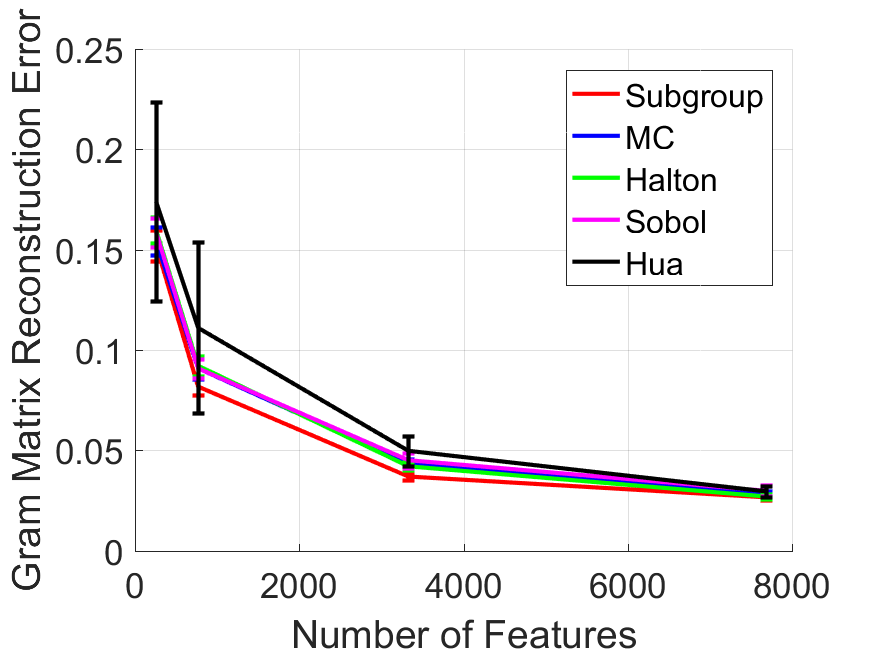}}
\caption{Relative Mean and Max Reconstruction Error for Gaussian, Zero-order and First-order Arc-cosine Kernel on SIFT1M dataset.}
\label{fig_KernelApp2}
\end{figure}

\end{document}